\newtheorem{theorem}{Theorem}[section]
\newtheorem{lemma}[theorem]{Lemma}
\newtheorem{corollary}[theorem]{Corollary}
\newtheorem{claim}[theorem]{Claim}
\newcommand\givenbase[1][]{\:#1\lvert\:}
\let\given\givenbase
\DeclarePairedDelimiterX\Basics[1](){\let\given\sgiven #1}
\newcommand{\dem}{\operatorname{dem}}
\newcommand{\diam}{\operatorname{diam}}
\newcommand{\wdiam}{\operatorname{wdiam}}
\newcommand{\id}{\operatorname{id}}
\newcommand{\dist}{d}
\newcommand{\st}{\operatorname{st}}
\newcommand{\vol}{\operatorname{vol}}
\newcommand{\ball}{\operatorname{Ball}}
\newcommand{\cluster}{\operatorname{Clusters}}
\newcommand{\Ex}{\operatorname{Ex}}
\title{Dynamic Maintenance of Low-Stretch Probabilistic Tree Embeddings with Applications}
\author{Sebastian Forster\thanks{University of Salzburg, Austria. Supported by the Austrian Science Fund (FWF): P 32863-N.} \and Gramoz Goranci\thanks{University of Toronto, Canada. This work was partially done while at the University of Vienna. This research is partly supported by Natural Sciences and Engineering Research Council of Canada (NSERC). } \and Monika Henzinger\thanks{University of Vienna, Austria. The research leading to these results has received funding from the European Research Council under the European Union's Seventh Framework Programme (FP/2007-2013) / ERC Grant Agreement no. 340506.}
}
\date{}
\begin{document}

\maketitle

\thispagestyle{empty}

\begin{abstract}
We give the first non-trivial fully dynamic probabilistic tree embedding algorithm for a weighted graph $ G $ with $ n $ nodes and at most $ m $ edges undergoing edge insertions and deletions.
The goal in this problem is to maintain a tree containing all nodes of $ G $ with a randomized algorithm such that for every edge $ (u, v) $ of $ G $ the expected length of the path from $ u $ to $ v $ in the tree exceeds the weight of the edge $ (u, v) $ only by a small multiplicative factor, called the \emph{stretch} of the embedding.
In this paper, we obtain a trade-off between amortized update time and expected stretch against an oblivious adversary.
At the two extremes of this trade-off, we can maintain a tree of expected stretch $ O (\log^4 n) $ with update time $ m^{1/2 + o(1)} $ or a tree of expected stretch $ n^{o(1)} $ with update time $ n^{o(1)} $ (for edge weights polynomial in $ n $).
A guarantee of the latter type has so far only been known for maintaining tree embeddings with \emph{average} (instead of expected) stretch~[Chechik/Zhang, SODA '20].

Our main result has direct implications to fully dynamic approximate distance oracles and fully dynamic buy-at-bulk network design as our trade-off from above carries over to these two problems with minor overheads.
For dynamic distance oracles, our result is the first to break the $ O (\sqrt{m}) $ update-time barrier.
For buy-at-bulk network design, a problem which also in the static setting heavily relies on probabilistic tree embeddings, we give the first non-trivial dynamic algorithm.
As probabilistic tree embeddings are an important tool in static approximation algorithms, further applications of our result in dynamic approximation algorithms are conceivable.

From a technical perspective, we obtain our main result by first designing a decremental (i.e., deletions-only) algorithm for probabilistic low-diameter decompositions via a careful combination of Bartal's ball-growing approach [FOCS '96] with the pruning framework of Chechik and Zhang [SODA '20].
Such a low-diameter decomposition is the heart of Bartal's seminal tree embedding construction and we show how to adapt it to the decremental setting.
We then extend this to a fully dynamic algorithm by enriching a well-known ``decremental to fully dynamic'' reduction with a new \emph{bootstrapping} idea to recursively employ a fully dynamic algorithm instead of a static one in this reduction.
In contrast to previous applications of this type of reduction, such a bootstrapping can be applied efficiently because our decremental algorithm comes with the additional dynamic guarantee that each node changes its location in the tree only a logarithmic number of times over the course of the algorithm.

\end{abstract}

\newpage

\section{Introduction}
Approximating fundamental graph properties with a simpler graph has been an area of extensive research, leading to many powerful applications in the design of graph algorithms. A particularly simple graph of choice is a tree, since many graph problems admit somewhat easier solution on tree inputs. Typical examples include spanning trees (preserving connectivity), tree flow sparsifiers (preserving cuts and flows)~\cite{Racke02,HarrelsonHR03, RackeST14}, etc. The common goal behind these algorithmic tools is to ensure that the loss incurred when transferring to trees is as small as possible. 

One of the most powerful tree-based graph reductions has been the \emph{probabilistic tree embedding}~(PTE)~\cite{Bartal96,Bartal98,FakcharoenpholRT04} where the metric structure of an arbitrary graph is embedded probabilistically into special trees. More precisely, given a graph $G$, the goal is to find a probability distribution over a set $\tau$ of trees such that (1) distances in each tree of $\tau$ dominate those in $G$ and (2) expected distances in the random tree sampled from the distribution are within a factor of $\alpha$ from those in $G$, where $\alpha$ is usually referred to as \emph{stretch}. Building on the seminal work of Bartal~\cite{Bartal96,Bartal98}, Fakcharoenphol, Rao, and Talwar~\cite{FakcharoenpholRT04} showed that any graph $G$ admits a probabilistic tree embedding with stretch $O(\log n)$, which is existentially optimal for expander graphs. Their embedding can be computed in
time $O(m \log n)$~\cite{Blelloch0S17}.
This result has proven instrumental in designing approximation algorithms for a large class of problems including metric labeling~\cite{KleinbergT02}, buy-at-bulk network design~\cite{AwerbuchA97}, group steiner tree problem~\cite{GargKR00}, linear arrangement and spreading metrics~\cite{Bartal04}, vehicle routing~\cite{CharikarCGG98}, among others. 

Driven by the fundamental importance and applicability of probabilistic tree embeddings,  we initiate their study in the \emph{dynamic} setting. Concretely, the goal is to maintain a random tree $T$ with low stretch for a dynamic graph $G$ with $n$ vertices and maximum edge length $W$ that undergoes edge insertions and deletions such that after each update to $G$ the algorithm computes necessary changes to $T$. We seek to achieve a small time for handling edge updates while still being able to ensure that the expected   main result is the \emph{first} non-trivial  fully dynamic algorithm for maintaining a probabilistic tree embedding. 
Specifically, for edge weights polynomial in $ n $, we achieve $O(\log^4 n)$ stretch with $m^{1/2 + o(1)}$ time per operation, or  $n^{o(1)}$ stretch with $n^{o(1)}$ time per operation.
We actually show the following more general trade-off: Our algorithm achieves expected
stretch  $ (O (\log (n)))^{2 i - 1} (O (\log (nW)))^{i-1} $ using $ m^{1/i + o(1)} \cdot (O (\log (n W)))^{4i - 3} $ time per update for any integer $i \geq 2$. Note that for the related problem of dynamically maintaining spanning trees with low \emph{average} stretch the best result gives an average stretch 
of $n^{o(1)}$ in time $n^{o(1)}$~\cite{ChechikZ20}.

To demonstrate the applicability of our dynamic probabilistic tree embedding we show next that it enables novel dynamic algorithms for (1) distance oracles (aka dynamic APSP in a weighted, undirected graph) and (2) the buy-at-bulk network design problem. For the former, we present new trade-offs between the approximation factor and the running time guarantees of the oracle.
(1) Specifically
for any integer $i\ge 2$ we give a dynamic distance oracle with
an approximation ratio of $ (O (\log (n)))^{2 i - 1} (O (\log (nW)))^{i-1} $ and $ m^{1/i + o(1)} \cdot (O (\log (n W)))^{4i - 2} $ amortized update time. Note that \emph{no} fully dynamic distance oracle with $O(\sqrt{m})$ update time for any non-trivial approximation ratio was known before.

(2) In the \emph{buy-at-bulk network design} problem, we are given a weighted, undirected graph $G=(V,E,\ell)$,
where the length of each edge $e$ is $\ell_e$, and $ k $ source-sink pairs $ s_i, t_i $. Each $(s_i,t_i)$ pair has an associated demand  $\dem(i)$. Additionally, we are given a \emph{non-decreasing}, \emph{sub-additive} price function $f : \mathbb{R}_{\geq 0} \rightarrow \mathbb{R}_{\geq 0}$ that determines the cost $f(u)$ for purchasing a capacity $u$ on any edge in $G$. A feasible solution to the problem is a collection of paths $\{P_1,\ldots,P_k\}$ where $P_i$ is a path from $s_i$ to $t_i$ routing $\dem(i)$ units of commodity. Given a solution $\{P_1,\ldots,P_k\}$, let $c_e := \sum_{i:e \in P_i} \dem(i)$ be the total demand routed through the edge $e$. The goal is to \emph{find a feasible solution minimizing the total cost of the routing $\sum_{e \in E} \ell_e f(c_e)$.} We let $\mathrm{OPT}_G$ denote the total cost of the optimal solution. Note that the subbadditivity explicitly allows situations where doubling the capacity does not double the cost, which, for example, is the case with underground cables, where the cost is dominated by the cost of the excavation, and the cost of the cables placed underground is relatively small. 

We study a dynamic approximation algorithm for this problem that allows edge insertions and deletions to the input graph and supports queries of the form: Each query is given as parameter $k$ source-sink pairs $s_i,t_i$, each associated with a demand $\dem(i)$, and it returns an estimate that approximates $\mathrm{OPT}_G$. 
For this problem, we present
the \emph{first non-trivial} dynamic algorithm for the problem. Specifically for any integer $i\ge 2$ we give a fully dynamic
$ (O (\log (n)))^{2 i - 1} (O (\log (nW)))^{i-1} $-approximation algorithm with and $ m^{1/i + o(1)} \cdot (O (\log (n W)))^{4i - 3} $ amortized update time.

Taking cue from the usefulness of probabilistic tree embeddings, we believe that our techniques will find further applications in the future.

\paragraph{Technical contribution.} The main idea underlying our main result is to combine an iterative variant of Bartal's top-down construction in the static setting with a deletions-only algorithm that maintains \emph{probabilistic} low-diameter decompositions (LDDs).
It consists of three steps:

(1) For the decremental LDD algorithm we design a decremental version of Bartal's ball-growing process~\cite{Bartal96} that works against an oblivious adversary. This algorithm repeatedly picks an arbitrary vertex $c$ as center, selects a random radius (chosen from a suitable distribution) and removes the corresponding ball around $c$ from the graph. This decomposes the graph into balls.
Bartal then shows that for any edge the probability that it is an inter-ball edge is bounded.

We analyze a \emph{dynamic ball-growing process} where the above process is interleaved with arbitrary edge removals.
Intuitively, removing edges should not increase the probability that an edge  becomes an inter-cluster edge. However, a careful analysis is needed that
deals with dependencies that can arise as the ball-growing
process ``continues'' after each deletion step. 

To turn the balls grown by
this process into an LDD, we adapt the pruning  approach of Chechik and Zhang~\cite{ChechikZ20}.
That work associated a center with each cluster, we associate a center as well as a ball-growing process with each cluster.
Initially the whole graph is  one cluster, we pick a random center (according to a suitable distribution) and we initialize a ball-growing process for it. 
Now we repeatedly test for each cluster whether all its vertices are 
close to the center.
If
a vertex $v$ of a cluster lies too far from the cluster center, we use the ball-growing process of the cluster to grow a ball from $v$.
If the new ball has at most half the volume of its parent, it becomes a new cluster with center $v$
(removing its vertices from the old cluster, which might lead to the
removal of a set of edges from the ball-growing process of the old cluster) and we initialize
its own ball-growing process. Otherwise the ball around $v$ is \emph{not} turned into a cluster and instead we pick a new 
center for the \emph{old} cluster. Picking a new center and testing the distance of the vertices in the cluster from it is expensive as it takes time linear in its size, but
we show that whp the center of a cluster only changes $O(\log n)$ times.
This leads to a
hierarchy of clusters (and ball-growing processes) that is updated after each edge deletion. 
Using this approach, we can show that within total update time $ O(m^{1 + o(1)} \log W) $ (whp) we can maintain, for any given $\beta \in (0,1)$, a probabilistic LDD such that each cluster has weak diameter $O(\beta^{-1} \log^2 n)$ and each edge $e$ is an inter-cluster edge with probability at most $\beta w(e)$.

The main difference between this result and the decremental LDD algorithm of~\cite{ChechikZ20} is that (a) their work only bounds the \emph{total} number of
inter-cluster edges, while we bound \emph{for every edge} the probability that it is an inter-cluster edge\footnote{Note that if for each edge the probability of being an inter-cluster edge is at most $ \beta $, then the total number of inter-cluster edges is at most $ \beta m $ in expectation.} and (b)  their running time of $O(m \beta^{-1} \log^3n)$ (and,
thus, dependent on $\beta$), while ours is $O(m^{1 + o(1)} \log W)$
(and, thus, independent of $\beta$).
To achieve (a) we use Bartal's ball-growing process, to achieve (b) we use
the decremental 2-approximate SSSP algorithm of~\cite{HenzingerKN18} for maintaining estimates of the clusters' diameters, whose total update time is almost linear regardless of the cluster diameter parameter $\beta^{-1}$ (as opposed to an exact decremental algorithm like the Even-Shiloach tree~\cite{EvenS81,King99}).
Using the algorithm of~\cite{HenzingerKN18} adds certain complications because we use this algorithm to detect nodes that are too far away from the cluster center which are then removed from the cluster together with their balls.
If done naively, this approach would not respect the algorithm's oblivious adversary assumption.
We can circumvent this issue by \emph{not} reporting the removal of a ball from a cluster to the cluster center's instance of the decremental 2-approximate SSSP algorithm.
For this reason, the algorithm cannot faithfully detect when the diameter of a cluster becomes too large as it might still consider edges of removed balls for its distance estimates.
It can only detect if the distance of a node to its cluster center in a subgraph possibly larger than the cluster itself becomes too large. 
Thus, instead of providing a \emph{strong} diameter guarantee on the clusters, our probabilistic LDD only provides a \emph{weak} diameter guarantee; however, as demonstrated by Bartal~\cite{Bartal96}, the weak diameter guarantee is sufficient for constructing a PTE.

Note that the bottleneck of a running time depending on $\beta^{-1}$ was also inherent in the decremental probabilistic LDD algorithm of Forster and Goranci~\cite{ForsterG19} which implemented the random-shift clustering of Miller, Peng, and Xu~\cite{MillerPX13}. 
We would like to emphasize that the running time of our LDD algorithm being independent of the diameter parameter used for the cluster decomposition is key for making our PTE algorithm efficient.

(2) Equipped with this new decremental LDD algorithm we turn the hierarchical  static PTE algorithm of~\cite{Bartal96} into a decremental algorithm by maintaining \emph{one} decremental LDD algorithm per level in the hierarchy. 
This requires turning the
top-down approach of~\cite{Bartal96} into a bottom-up approach and
leads to a decremental PTE decomposition fulfilling the following four crucial properties: (a) The height of the resulting tree is $O(\log (nW))$, (b) 
the expected stretch is $O(\log^2 n \log (nW))$, (c)
the number of changes to the path from any vertex to the tree root is only polylogarithmic during the whole sequence of deletions, and (d) the total running time is
$O(m^{1 + o(1)} \log^2 W).$

(3) Based on this decremental PTE algorithm we then construct a fully dynamic PTE algorithm by using a ``recursive bootstrapping technique'':
A simple fully dynamic PTE algorithm can be achieved by running a static algorithm  after each update operations.
With the \emph{static} probabilistic tree embedding algorithm of~\cite{Blelloch0S17}, this gives a fully dynamic PTE algorithm that builds a tree with height $O(\log (nW))$, expected
stretch $O(\log n)$ and $O(m \log n)$ time per update. 

To bootstrap this result we proceed as follows:
We show how to turn (i) a decremental PTE algorithm with tree height $h_1$ and
stretch $s_1$ and (ii) \emph{a fully dynamic algorithm}  which outputs a tree of  height $h_2$ and
stretch $s_2$  into a faster fully dynamic algorithm which outputs a tree of height at most
$h_1 + h_2$, stretch $s_1 \cdot s_2$. The idea is 
that the decremental algorithm  outputs a compressed graph, called \emph{auxiliary graph},
on which the fully dynamic algorithm is run. The improvement in running time is achieved as the  auxiliary graph has size $O(k h_1)$, where $k$ is the number of updates since the beginning of the algorithm, resp., the last rebuild.

In step $ i $ of this bootstrapping scheme, the resulting tree height is $O(i \log (nW))$ and the stretch is
$ (O (\log (n)))^{2 i - 1} (O (\log (nW)))^{i-1} $. Furthermore by setting
$k =m^{1 - 1/i}$, (i) the  auxiliary graph has size $O(h_1
m^{1 - 1/i}) = \tilde O(m^{1 - 1/i})$ and (ii) the total running time of the decremental algorithm 
between two rebuilds can be amortized over $k$ operations, giving an amortized time of
$O(m^{1 + o(1)}/m^{1 - 1/i}) = O(m^{1/i + o(1)})$ per update operation.
On this auxiliary graph, the fully dynamic algorithm of step $i-1$ is executed, resulting in
a running time of $\tilde O((m^{1 - 1/i})^{1/(i-1) + o(1)}) = O(m^{1/i + o(1)})$ per update for the fully dynamic algorithms of step $i.$ This approach crucially depends on (1) the fact that the size of the auxiliary graph is only  $O(k h_1)$ and
(2) the fact that there are only polylogarithmically more updates in the fully dynamic algorithm than in the input graph. Fact (1) holds 
because  the decremental algorithm provides  a bound of $h_1$  on the height of its tree $T$ and each insertion of an edge $(u,v)$ leads to the insertion of the path from $u$, resp.~$v$ to the root of $T$ into the auxiliary graph. Fact (2) holds (a) because of the same reason as Fact (1) and (b) because the decremental algorithm
guarantees a polylogarithmic upper bound on the number of changes to the path from any vertex to the tree root and any such change leads to an update operation in the auxiliary graph.

\subsection*{Related Work}

 A topic closely related to our work is the dynamic maintenance of spanning trees with low average stretch.  Forster and Goranci~\cite{ForsterG19} were the first to study the problem and designed an algorithm with $n^{o(1)}$ average stretch and $n^{1/2+o(1)}$ update time per operation. Their result was subsequently improved by Chechik and Zhang~\cite{ChechikZ20}, who managed to keep the same stretch while bringing the update time down to $n^{o(1)}$. In contrast to our dynamic probabilistic tree embedding result, which ensures a low expected stretch for any vertex pair, the stretch guarantee in these works holds only on \emph{average}.

One of the most influential applications of probabilistic tree embeddings has been in the construction of distributions over trees that approximately preserves the cut and flow structure of a graph, known as \emph{cut-based decompositions}. In his groundbreaking work, R\"acke~\cite{Racke08} showed how to construct such a decomposition while losing only a logarithmic factor in the approximation, which is existentially optimal. Building upon the multiplicative weights update paradigm, his construction reveals that probabilistic tree embeddings~\cite{Bartal96,FakcharoenpholRT04} and cut-based decompositions are dual to each other. It has also led to the study of $j$-tree based graph approximation due to Madry~\cite{Madry10}, which has played a pivotal rule in the developments of approximating maximum flow in nearly linear time~\cite{KelnerLOS14, Sherman13, Peng16}.

Probabilistic embedding of graphs into tree metrics has been studied in other models of computation, including online~\cite{BartalFU20}, distributed~\cite{GhaffariL14, KhanKMPT08}, parallel~\cite{FriedrichsL18} and streaming algorithms~\cite{BeckerEL20}. The challenge of avoiding expensive \emph{exact} shortest path computations for construction probabilistic low-diameter decompositions has also been tackled by Becker, Emek and Lenzen~\cite{BeckerEL20}, who show that these computations can be replaced by a small number of approximate ones.
Building on the work of Bartal~\cite{Bartal96}, this technique implies algorithms for probabilistic tree embeddings in the CONGEST, PRAM and semi-streaming model, which are tight up to polylogarithmic factors. However, to the best our knowledge, their construction does not seem to extend to the dynamic setting.

There has been growing interest in maintaining graph-based decompositions or clustering, leading to breakthrough results for fundamental problems in dynamic graph algorithms. A prime example is the fully dynamic spanning tree algorithm of Nanongkai, Saranurak, and Wulff-Nilsen that crucially builds upon the dynamic maintenance of expander decompositions~\cite{NanongkaiS17, Wulff-Nilsen17, NanongkaiSW17, SaranurakW19}. Although expander decompositions readily imply low-diameter decompositions, this notion can only provide a guarantee on the total number of edges between clusters, which is not sufficient for our probabilistic decompositions. More importantly, the core idea underpinning their construction, the so called \emph{expander pruning} subroutine, has quadratic dependency on the expansion (and thus the diameter) parameter, which makes it inefficient for our purposes.  

The first sub-linear dynamic (approximate) distance oracle was developed by Abraham, Chechik and Talwar~\cite{AbrahamCT14}. For an unweighted, undirected graphs, they showed that there is a dynamic algorithm using $\tilde{O}(\sqrt{m}n^{1/k})$ expected amortized update time, $O(k^2\rho^2)$ query time and $2^{O(k\rho)}$ stretch, where $k \geq 2$ is an integer parameter and $\rho = 1+ \lceil\frac{\log n^{1-1/k}}{\log (m/n^{1-1/k})} \rceil$. While we believe that their result can be extended to weighted graphs, for example by using the near-optimal decremental distance oracle of Chechik~\cite{Chechik18}, prior to our work it was not clear how to circumvent their $\sqrt{m}$ barrier in the update time. Our new dynamic approximate distance oracle shows that this barrier can be overcome while keeping the stretch polylogarithmic.

\section{Preliminaries}
In the following we settle some basic notation and terminology and provide the most important definitions needed throughout the paper.

\paragraph{Basic Terminology.}

In this paper, we consider weighted, undirected graphs $ G = (V, E) $ with $ n $ nodes and (at most) $ m $ edges and with positive edge weights in the range from $ 1 $ to $ W $.
We refer to an edge with endpoints $ u $ and $ v $ as the (unordered) pair $ (u, v) $.
We denote the weight of an edge $ e \in E $ by $ w_G (e) $ or by $ w_G (u, v) $ if $ u $ and $ v $ are the endpoints of $ e $.
The \emph{degree} $ \deg_G (v) $ of a node is the number of nodes adjacent to $ v $, i.e., $ \deg_G (v) = | \{ u \in V \mid (u, v) \in E \} | $.
The \emph{volume} $ \vol_G (U) $ of a set of nodes $ U \subseteq V $ is the sum of the degrees of the nodes in $ U $, i.e., $ \vol_G (U) = \sum_{v \in U} \deg (v) $.
Note that $ \vol_G (V) = 2 |E| $.
We denote the sub-graph of $ G $ induced by a set of nodes $ U \subseteq V $ by $ G [U] $ (and $ G [U] = (U, E \cap U \times U) $).
For every pair of nodes $ u, v \in V $, the \emph{distance} $ \dist_G (u, v) $ between $ u $ and $ v $ is the length of the shortest path from $ u $ to $ v $ in $ G $.
The \emph{ball} around a node $ u $ with radius $ r $ is the set of nodes defined by $ \ball_G (u, r) = \{ v \in V \mid \dist_G (u, v) \leq r \} $.
The \emph{diameter} of~$ G $ is the maximum pairwise distance in $ G $, i.e., $ \diam (G) = \max_{u, v \in V} \dist_G (u, v) $.
The \emph{weak diameter}\footnote{In contrast, the \emph{strong diameter} of a set of nodes $ U $ is defined as $ \diam (G[U]) $, the diameter of the sub-graph induced by $ U $.} $ \wdiam_G (U) $ of a set of nodes $ U \subseteq V $ in $ G $ is the maximum distance in $ G $ between any pair of nodes of $ U $ , i.e., $ \wdiam_G (U) = \max_{u, v \in U} \dist_G (u, v) $.
Throughout the paper, we might omit the subscript indicating the graph we refer to if it is clear from the context.
We say that an event happens with high probability (whp) if it happens with probability at least $ 1 - \tfrac{1}{n^a} $ for any given constant $ a \geq 1 $.
We use the notation $ \tilde O (t) $ as an abbreviation for $ O (t \log^{O(1)} (nW)) $ (even if $ t $ does not depend on $ n $ or $ W $).

\paragraph{Dynamic Algorithms.}

In dynamic graph algorithms, the objective of the algorithm is to maintain some problem-specific output under updates to the input graph by spending as little time as possible after each update.
In this paper, we consider the following dynamic model:
An \emph{oblivious adversary} first fixes a (finite or infinite) sequence of graphs $ \mathcal{G} = G_0, G_1, \ldots $ where -- apart from the initial graph $ G_0 $ -- each $ G_i $ is obtained from $ G_{i-1} $ by applying an update operation.
In this paper, we consider two types of update operations, namely the insertion of a single edge or the deletion of a single edge, and assume that all graphs in the sequence share a common set of nodes.
After the adversary has chosen its sequence, the updates are revealed to the algorithm in an online fashion one at a time and after each update the algorithm must make all necessary changes to its output (to make it fit to the current graph in the sequence) before the next update is revealed.
For the sake of readability we usually do not specify the sequence $ \mathcal{G} $ explicitly and instead refer to the ``current'' version of the graph by the symbol $ G $ in the description and analysis of dynamic algorithms.

For \emph{fully dynamic} algorithms tolerating both edge insertions and edge deletions, we say that an algorithm has \emph{amortized update time} $ u (n, m) $ if for any $ k \geq 1 $ the total time spent for processing any sequence of $ k $ updates is at most $ k \cdot u (n, m) $ when starting from an empty graph with $ n $ nodes that during the sequence of updates has at most $ m $ edges (including the time needed to initialize the algorithm on an empty graph with $ n $ nodes before the first update).
For \emph{decremental} algorithms tolerating only edge deletions, we say that an algorithm has \emph{total update time} $ t (n, m) $ if the total time spent for any sequence of at most $ m $ deletions is at most $ t (n, m) $ when starting from a graph with $ n $ nodes and $ m $ edges (including the time needed to initialize the algorithm on a graph on $ n $ nodes and $ m $ edges before the first update).

\paragraph{Dynamic Tree Embeddings.}

A \emph{tree embedding} of a graph $ G = (V, E) $ is a forest $ T = (U, F) $ such that $ T $ contains the nodes of $ G $ and does not under-estimate the distances of $ G $, i.e., $ U \supseteq V $ and $ \dist_T (u, v) \geq \dist_G (u, v) $ for any pair of nodes $ u, v \in V $.
Note that for connected nodes $ u $ and $ v $ the path from $ u $ to $ v $ in $ T $ is unique.
For any pair of nodes $ u, v \in V $, the \emph{stretch} of $ u $ and $ v $ in $ T $ is the multiplicative factor by which their distance is over-estimated in $ T $, i.e., $ \st_T (u, v) = \tfrac{\dist_T (u, v)}{\dist_G (u, v)} $.
A \emph{probabilistic tree embedding} (PTE) of a graph $ G = (V, E) $ is a probability distribution $ \tau $ over tree embeddings of $ G $.
For any pair of nodes $ u, v \in V $, the \emph{expected stretch} of $ u $ and $ v $ in $ \tau $ is $ \st_{\tau} (u, v) = \Ex_{T \sim \tau} \st_T (u, v) $.
The \emph{maximum expected stretch} of $ \tau $ with respect to $ G $ is the maximum expected stretch of any pair of nodes, i.e., $ \st_{\tau} (G) = \max_{u, v \in V} \st_{\tau} (u, v) $.
Algorithmically, the goal is to design a randomized algorithm that computes a tree embedding $ T $ sampled from $ \tau $.
This distribution $ \tau $ might be defined implicitly by the random choices of such an algorithm.

A \emph{dynamic tree embedding} of a sequence of graphs $ \mathcal{G} = G_0, G_1, \ldots $ on the same set of nodes~$ V $ is a sequence of forests $ \mathcal{T} = T_0, T_1, \ldots $ such that each $ T_i $ is a tree embedding of $ G_i $.
The \emph{stretch} of a pair of nodes $ u, v \in V $ in $ \mathcal{T} $ at time~$ i $ is the stretch of $ u $ and $ v $ in $ T_i $, i.e., $ \st_{\mathcal{T}, i} (u, v) = \st_{T_i} (u, v) $.
A \emph{dynamic probabilistic tree embedding} (PTE) of a sequence of graphs $ \mathcal{G} = G_0, G_1, \ldots $ on the same set of nodes~$ V $ is a probability distribution $ \mathfrak{T} $ over \emph{dynamic tree embeddings} of $ \mathcal{G} $.
For any pair of nodes $ u, v \in V $, the \emph{expected stretch} of $ u $ and $ v $ in $ \mathfrak{T} $ at time $ i $ is $ \st_{\mathfrak{T}, i} (u, v) = \Ex_{\mathcal{T} \sim \mathfrak{T}} \st_{\mathcal{T}, i} (u, v) $.
The \emph{maximum expected stretch} of $ \mathfrak{T} $ with respect to $ \mathcal{G} $ is the maximum expected stretch of any pair of nodes at any time, i.e., $ \st_{\mathfrak{T}} (\mathcal{G}) = \max_i \max_{u, v \in V} \st_{\mathfrak{T}, i} (u, v) $.
We say that $ \mathfrak{T} $ is \emph{rooted} if for each dynamic tree embedding $ \mathcal{T} = T_0, T_1, \ldots $ of $ \mathfrak{T} $ each connected component in each $ T_i $ has a designated root and the \emph{height} of~$ \mathfrak{T} $ is the maximum number of edges on any root-to-leaf path.
Algorithmically, the goal is to design a randomized dynamic algorithm that maintains a dynamic tree embedding $ \mathcal{T} $ sampled from $ \mathfrak{T} $.
After each update to the graph, the dynamic algorithm needs to output the changes to the forest it maintains, i.e., all nodes and edges that are added to the forest or removed from it, respectively.

\paragraph{Dynamic Low-Diameter Decompositions.}

A \emph{clustering} $ C $ of a graph $ G = (V, E) $ is a partition of the nodes $ V $ into non-empty pairwise disjoint subsets called \emph{clusters} where for each node $ v \in V $ we denote the cluster of $ v $ by $ C (v) $. 
The weak diameter of $ C $ is the maximum weak diameter of any of its clusters, i.e., $ \wdiam_G (C) = \max_{v \in V} (\wdiam_G (C(v)) $.
A \emph{probabilistic weak $ (\beta, \delta) $-decomposition} (with $ \beta \in (0, 1) $ and $ \delta \geq 1 $) is a probability distribution $ \Gamma $ over clusterings of $ G $ such that each $ C \in \Gamma $ has weak diameter at most $ \delta $ and for every edge $ (u, v) \in E $ the probability of being an inter-cluster edge is at most a $ \beta $-fraction of its weight, i.e., $ \Pr_{C \sim \Gamma} [C (u) \neq C (v)] \leq \beta \cdot w_G (u, v) $.
If $ \delta $ is roughly proportional to $ \beta^{-1} $, we refer to such a decomposition as a \emph{probabilistic weak low-diameter decomposition (LDD)}.

A \emph{dynamic clustering} of a sequence of graphs $ \mathcal{G} = G_0, G_1, \ldots $ on the same set of nodes~$ V $ is a sequence of clusterings $ \mathcal{C} = C_0, C_1, \ldots $ such that each $ C_i $ is a clustering of $ G_i $.
The weak diameter of $ \mathcal{C} $ is the maximum weak diameter of any $ C_i $, i.e., $ \wdiam{\mathcal{G}} (\mathcal{C}) = \max_i \wdiam_{G_i} (C_i) $.
A \emph{dynamic probabilistic weak $ (\beta, \delta) $-decomposition} (with $ \beta \in (0, 1) $ and $ \delta \geq 1 $) of a sequence of graphs $ \mathcal{G} = G_0, G_1, \ldots $ on the same set of nodes~$ V $ with edge sets $ E_0, E_1, \ldots $ is a probability distribution $ \mathfrak{C} $ over dynamic clusterings of $ \mathcal{G} $ such that each $ \mathcal{C} \in \mathfrak{C} $ has weak diameter at most $ \delta $ and for every $ i $ and every edge $ (u, v) \in E_i $ the probability of being an inter-cluster edge at time $ i $ is at most a $ \beta $-fraction of its weight, i.e., $ \Pr_{\mathcal{C} \sim \mathfrak{C}} [C_i (u) \neq C_i (v)] \leq \beta \cdot w_{G_i} (u, v) $ (where $ C_i $ is the $i$-th clustering of the dynamic clustering $ \mathcal{C} $ sampled from $ \mathfrak{C} $).
If $ \delta $ is roughly proportional to $ \beta^{-1} $, we refer to such a dynamic decomposition as \emph{dynamic probabilistic weak low-diameter decomposition (LDD)}.

\section{Decremental Tree Embedding}
In this section, we develop an algorithm for maintaining a probabilistic tree embedding under deletions of edges.
We proceed as follows:
We first analyze a dynamic process for sequentially growing balls of randomly chosen radii, generalizing the original arguments of Bartal~\cite{Bartal96}.
We then design an algorithm for maintaining a probabilistic weak low-diameter decomposition under edge deletions by adapting an algorithmic idea of Chechik and Zhang~\cite{ChechikZ20} for maintaining such decompositions without the probabilistic guarantee.
Finally, we apply our decremental decomposition in an iterative manner to maintain a probabilistic tree embedding with expected-stretch guarantee on each non-tree edge under edge deletions in the input graph, following Bartal's original construction.

\subsection{Analysis of a Dynamic Ball-Growing Process}\label{sec:ball growing process}

Consider the following process on a weighted, undirected graph $ G = (V, E) $:
First, select a real $ p \in (0, 1) $, an integer $ k \geq 1 $, and $ k $ non-overlapping subsets $ E_1, \dots, E_k \subseteq E $ of edges (where set $E_i$ models
the edges deleted in the $i$-th ``round'').
Then repeat the following for $ i = 1 $ to $ k $ to grow balls $ B_1, \dots, B_k \subseteq V $:
\begin{enumerate}
\item Select an arbitrary vertex $ c_i $ of $ G_i = (V \setminus (B_1 \cup \dots \cup B_{i-1}), E \setminus (E_1 \cup \dots \cup E_{i-1})) $ (where $ G_1 = G $).
\item Randomly sample a value $ S_i $ from the geometric distribution\footnote{In this paper, the geometric distribution measures the number of Bernoulli trials needed to get the first success and thus $ R_i \geq 1 $.} with success parameter~$ p $ and grow a ball $ B_i $ from $ c_i $ in $ G_i $ of radius $ R_i = S_i - 1 $.

\end{enumerate}
Formally, we consider two adversaries in this process.
The first adversary chooses the next vertex $ c_i $ to grow a ball from and is \emph{fully adaptive} in the sense that it may see everything that has happened so far including the random choices made by our algorithm. 
The second adversary chooses the next set of edges to delete and is \emph{oblivious} to the random choices made by our algorithm. As the random
choices of the algorithm can only be revealed through the answers returned by the
algorithm, we require that the choices of the oblivious adversary
(i.e.~the sets $ E_1, \dots, E_k $) are fixed \emph{without knowledge of the
answers of the algorithm}. This is, for example, the case if they are fixed in advance, i.e., before the algorithm has started.
Note that for our application of this process in the next section the two adversaries as well as the fact  that each $E_i$ is a set of edges are important.

A static version of this process -- without edge deletions -- has been analyzed by Bartal~\cite{Bartal96} in his seminal work on tree embeddings.
Our analysis for the dynamic process follows a proof idea of Gupta~\cite{Gupta03} for the static process.

\begin{lemma}\label{lem:bound on radius of cluster}
For every $ 1 \leq i \leq k $, $ R_i \leq a p^{-1} \ln n $ with probability at least $ 1 - \tfrac{1}{n^a} $  for any $ a \geq 1 $.
\end{lemma}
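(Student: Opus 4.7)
The plan is to prove this by a direct tail-bound computation on the geometric distribution, since $R_i$ depends neither on the adversary's choices nor on the other $R_j$'s once we condition on the trial in round $i$ being performed.

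First I would recall the relevant fact about the geometric distribution. Since $S_i$ counts the number of Bernoulli$(p)$ trials until the first success, we have $\Pr[S_i \geq t] = (1-p)^{t-1}$ for every integer $t \geq 1$. Consequently, $R_i = S_i - 1$ satisfies $\Pr[R_i \geq t] = (1-p)^{t}$ for every integer $t \geq 0$. This tail bound holds unconditionally, regardless of how the centers $c_i$ or deletion sets $E_i$ are selected, because the radius is drawn from the geometric distribution independently of everything else in round~$i$.

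Next I would apply the standard inequality $1 - p \leq e^{-p}$ (valid for $p \in (0,1)$). Setting $t = \lceil a p^{-1} \ln n \rceil$, one obtains
\begin{equation*}
\Pr[R_i \geq t] = (1-p)^{t} \leq e^{-p t} \leq e^{-a \ln n} = \frac{1}{n^{a}}.
\end{equation*}
Hence $\Pr[R_i > a p^{-1} \ln n] \leq \Pr[R_i \geq \lceil a p^{-1} \ln n \rceil] \leq 1/n^{a}$, which is exactly the claim.

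There is essentially no obstacle here: the only subtlety is that one must observe that the tail bound on $R_i$ is a statement about a single geometric random variable and therefore does not interact with the dynamic aspects of the process (the fully adaptive choice of $c_i$ or the oblivious choice of $E_i$). All of the work in Section~\ref{sec:ball growing process} is reserved for the more delicate statements about cut probabilities; this lemma only ensures that no individual ball grows to an unreasonably large radius, and a one-line Chernoff-type computation suffices.
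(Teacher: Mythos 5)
Your proof is correct and follows essentially the same route as the paper: a direct tail bound on the geometric distribution, $\Pr[S_i \geq t] = (1-p)^{t-1}$, combined with $1-p \leq e^{-p}$ (the paper writes this as $(1-p)^{p^{-1}} \leq e^{-1}$ and handles the rounding with a floor rather than your ceiling). The only cosmetic quibble is that this is a plain geometric tail estimate, not really a Chernoff-type computation.
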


\begin{proof}
Recall that for any $ k \geq 1 $ we have $ \Pr [S_i \geq k] = (1 - p)^{k-1}$,  the probability that a Bernoulli experiment with success probability $ p $ fails $ k-1 $ times in a row.
For $ k = \lfloor a p^{-1} \ln n \rfloor + 2 $ we get
\begin{equation*}
    \Pr [S_i \geq \lfloor a p^{-1} \ln n \rfloor + 2] = (1 - p)^{\lfloor a p^{-1} \ln n \rfloor + 1} \leq (1 - p)^{a p^{-1} \ln n} \leq e^{-a \ln n} = n^{-a} \, ,
\end{equation*}
where we employ the inequality $ (1 - p)^{p^{-1}} \leq e^{-1} $.
It now follows that
\begin{equation*}
\Pr [R_i \leq \lfloor a p^{-1} \ln n \rfloor] = \Pr [S_i \leq \lfloor a p^{-1} \ln n \rfloor + 1] = 1 - \Pr [S_i \geq \lfloor a p^{-1} \ln n \rfloor + 2] \geq 1 -n^{-a} \, .
\end{equation*}
\end{proof}

We say that an edge $ e $ is \emph{leaving a ball} if $e$'s endpoints are contained in two different balls $ B_i $ and $ B_j $ (with $ 1 \leq i, j \leq k $ and $ i \neq j $) or because one endpoint of $e$ is contained in some ball $ B_i $ (for some $ 1 \leq i \leq k $) and the other endpoint is contained in no ball at all, i.e., in $ V \setminus (B_1 \cup \dots \cup B_k) $.

\begin{lemma}\label{lem:probability for leaving a ball}
For any edge $ e \in E \setminus (E_1 \cup \dots \cup E_k) $, the probability that $ e $ is leaving a ball is at most $ p \cdot w (e) $.
\end{lemma}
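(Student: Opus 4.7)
The plan is to reduce the event ``$e$ leaves a ball'' to a single decisive round, bound the per-round cut probability using the memoryless property of the geometric distribution together with the independence of the freshly sampled $R_i$ from the past, and then sum the resulting (disjoint) contributions over rounds.

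Fix an edge $e = (u,v) \in E \setminus (E_1 \cup \dots \cup E_k)$ and let $I^*$ be the smallest index $i \in \{1, \dots, k\}$ with $\{u,v\} \cap B_i \neq \emptyset$, or $I^* = \infty$ if no such index exists. Since once an endpoint is absorbed into some $B_i$ it never leaves, the edge leaves a ball exactly when $I^* < \infty$ and $|\{u,v\} \cap B_{I^*}| = 1$: if $I^* = \infty$ neither endpoint is in any ball, and if both endpoints enter in the same round they share the ball $B_{I^*}$. Thus it suffices to upper-bound $\sum_{i=1}^{k} \Pr[I^* = i \text{ and exactly one of } u,v \text{ lies in } B_i]$.

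For the per-round step I would condition on the history $\mathcal{H}_{i-1} = (c_1, R_1, \dots, c_{i-1}, R_{i-1}, c_i)$, which fixes $G_i$, $c_i$, $a_i := \dist_{G_i}(c_i, u)$, $b_i := \dist_{G_i}(c_i, v)$, and WLOG $a_i \leq b_i$. Because $e$ is never deleted, the edge $(u,v)$ still lies in $G_i$, so either $a_i = b_i = \infty$ (and nothing can be absorbed in round $i$) or the triangle inequality yields $b_i - a_i \leq w(e)$. The event ``exactly $u$ is absorbed in round $i$'' is $\{a_i \leq R_i < b_i\}$, and since $R_i$ is drawn afresh, independent of $\mathcal{H}_{i-1}$, the memoryless property gives
\begin{equation*}
    \Pr[\, a_i \leq R_i < b_i \mid \mathcal{H}_{i-1}\,] = \Pr[R_i \geq a_i \mid \mathcal{H}_{i-1}] \cdot \bigl(1 - (1-p)^{\lceil b_i \rceil - \lceil a_i \rceil}\bigr) \leq \Pr[R_i \geq a_i \mid \mathcal{H}_{i-1}] \cdot p \cdot w(e),
\end{equation*}
combining $\lceil b_i \rceil - \lceil a_i \rceil \leq w(e)$ (for integer weights) with Bernoulli's inequality $1 - (1-p)^x \leq px$, valid for $x \geq 1$ since $w(e) \geq 1$.

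Summing over $i$, observing that the events $\{I^* = i\}$ are pairwise disjoint and that on the survival event $\{I^* \geq i\}$ the condition $\{R_i \geq a_i\}$ coincides with $\{I^* = i\}$ (the closer endpoint being absorbed in round $i$ is exactly the first absorption of any endpoint), the tower property yields
\begin{equation*}
    \Pr[\,e \text{ leaves a ball}\,] = \sum_{i=1}^{k} \Ex\bigl[\mathbf{1}_{\{I^* \geq i\}} \cdot \Pr[a_i \leq R_i < b_i \mid \mathcal{H}_{i-1}]\bigr] \leq p \cdot w(e) \cdot \sum_{i=1}^{k} \Pr[I^* = i] \leq p \cdot w(e).
\end{equation*}
The main obstacle will be to handle the fully adaptive choice of $c_i$ cleanly: the first adversary may use all earlier centers and radii to pick $c_i$, so one must explicitly include $c_i$ in $\mathcal{H}_{i-1}$ and then rely on $R_i$ being sampled afresh and independently of $\mathcal{H}_{i-1}$. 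The obliviousness of the second adversary who chooses the $E_j$'s is what ensures this conditioning does not retroactively entangle $R_i$ with the deletion sequence, so the per-round bound applies uniformly across $i$.
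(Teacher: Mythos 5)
Your proof is correct and follows essentially the same route as the paper's: identify the first round in which an endpoint of $e$ is absorbed, bound the gap between the two endpoints' distances to that round's center by $w(e)$ via the still-present edge, and use memorylessness of the geometric radius together with Bernoulli's inequality to bound the conditional cut probability by $p \cdot w(e)$, with obliviousness of the deletion sets justifying the conditioning; your explicit per-round decomposition via $I^*$, the tower property, and the identity $\Ex\bigl[\mathbf{1}_{\{I^* \geq i\}}\Pr[R_i \geq a_i \mid \mathcal{H}_{i-1}]\bigr] = \Pr[I^* = i]$ is just a more formal rendering of the paper's conditioning on the event that some endpoint enters a ball. The integrality/ceiling caveat you flag is glossed over in the same way by the paper's own step $\Pr[S_i \geq x(e)+1] = (1-p)^{x(e)}$, so it does not separate the two arguments.
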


\begin{proof}
Consider an arbitrary edge $ e \in E \setminus (E_1 \cup \dots \cup E_k) $ and let $ L_e $ be the event that $ e $ is leaving a ball.
Further, let $ C_e $ be the event that at least one of $e$'s endpoint is contained in a ball, i.e., in $ B_1 \cup \dots \cup B_k $.
We first argue that $ \Pr [L_e] \leq \Pr [L_e \given C_e] $:
As event~$ L_e $ implies event~$ C_e $, we have $ \Pr [L_e] = \Pr [L_e \cap C_e] $.
Now by the definition of conditional probability we have $ \Pr [L_e \cap C_e] = \Pr [L_e \given C_e] \cdot \Pr [C_e] $.
Since $ \Pr [C_e] \leq 1 $, the inequality $ \Pr [L_e] \leq \Pr [L_e \given C_e] $ follows.
We may therefore condition on event $ C_e $ to bound the probability of event $ L_e $ and assume in the following that at least one endpoint of $ e $ is contained in a ball.
For the sake of readability we will in the following abuse notation and not write the conditioning on $ C_e $ anymore.

Let $ u $ be the endpoint of $e$ that has been added to a ball first\footnote{To make the ordering precise, we assume that balls are grown in a breadth-first search manner adding nodes by increasing distance.}, and let $ v $ be the other endpoint.
Let $ i $ denote the number of the iteration in which $ u $ was added to a ball, let $ B_i $ be the corresponding ball, and let $ c_i $ be its center.
It must be the case that $ \dist_{G_i} (c_i, u) \leq \dist_{G_i} (c_i, v) $ by the choice of $ u $ being the endpoint that was added to a ball first.
Now $ e $ is leaving a ball if and only if $ v $ is not contained in the same ball as $ u $, i.e., if $ v \notin B_i $.
Observe further that $ v $ is not contained in $ B_i $ if and only if its distance to the center $ c_i $ exceeds the sampled radius $ R_i $, i.e., if $ \dist_{G_i} (c_i, v) > R_i $.
We therefore have $ \Pr [L_e] =  \Pr [\dist_{G_i} (c_i, v) > R_i] $.\footnote{As explained above, we are omitting the conditioning on $ C_e $ here on purpose to enhance the readability of the following inequalities.}
We will bound the complementary probability $ \Pr [\dist_{G_i} (c_i, v) \leq R_i] $.

Let $ x (e) = \dist_{G_i} (c_i, v) - \dist_{G_i} (c_i, u) $ be the difference in distance to $ c_i $ between both endpoints of~$ e $.
Note that $ x (e) \geq 0 $ and $ x (e) \leq w (e) $.
By the memorylessness of the geometric distribution we have 
\begin{align*}
    \Pr [R_i \geq \dist_{G_i} (c_i, v) \given R_i \geq \dist_{G_i} (c_i, u)] &= \Pr [R_i \geq \dist_{G_i} (c_i, u) + x (e) \given R_i \geq \dist_{G_i} (c_i, u)] \\
    &= \Pr [R_i \geq x (e)] = \Pr [S_i \geq x (e) + 1] \, .
\end{align*}

Recall that in the geometric distribution for the first ``success'' to appear after at least $ x (e) + 1 $ trials, the first $ x (e) $ trials must have been unsuccessful, which at each trial happens independently with probability $ 1 - p $.
By additionally applying Bernoulli's inequality, we get
\begin{equation*}
    \Pr [S_i \geq x (e) + 1] = (1 - p)^{x(e)} \geq 1 - p x(e) \geq 1 - p \cdot w(e) \, .
\end{equation*}
It follows that
\begin{equation*}
     \Pr [L_e] = \Pr [\dist_{G_i} (c_i, v) > R_i] = 1 - \Pr [\dist_{G_i} (c_i, v) \leq R_i] \leq p \cdot w(e) \, .
\end{equation*}
\end{proof}

Note that our proof crucially relied on the assumption that the sequence of edge deletions $ E_1, \dots, E_k $ was fixed \emph{without knowledge of the values $ S_1, \dots, S_k $}, i.e., it relied on the adversary generating the sequence of edge deletions being oblivious.
If any set $ E_i $ of edges to delete were selected \emph{with knowledge of} $ S_i $, then the proof would fail because $ e $ -- from the point of view of the random sampling process -- would not be an ``arbitrary'' remaining edge anymore as its choice might depend on the value of $ S_i $.
As an extreme case, we could, for example, observe the current balls and delete all edges that are internal to a ball. This would lead to a situation where each edge is leaving a ball with ``probability'' $ 1 $.

\subsection{Decremental Probabilistic Low-Diameter Decomposition}

Our strategy for maintaining a probabilistic weak low-diameter decomposition (LDD) is the following. Using a parameter $p \in (0,1)$,
the basic idea is to form clusters from balls obtained by the ball-growing process from Section~\ref{sec:ball growing process}.
This gives us clusters such that each edge has probability $ p $ of being an inter-cluster edge (which happens if the edge leaves a ball) and initially each cluster as diameter $ \tilde O (p^{-1}) $.
To detect whether the diameter of any cluster grows beyond this value due to edge deletions, we run a decremental approximate SSSP algorithm from a randomly chosen cluster center.
This decremental algorithm is initialized on the sub-graph induced by the respective cluster.
Whenever some node of the cluster is too far away from its center, we fix the situation by growing a new ball around this node with a randomly chosen radius to form a new cluster, and (under certain conditions) removing that ball from the cluster it originated from.
By our analysis of the ball-growing process, each edge is leaving this new cluster with probability $ p $ and the new cluster has diameter $ \tilde O (p^{-1}) $.
Thus, for each cluster, we spawn its own  ball-growing process, leading to  a hierarchy of clusters as new clusters are formed from balls in the ball-growing process of their ``parent'' cluster. The cluster of each ball-growing process can be modified by (a) edge deletions due to deletions in $G$ given by the \emph{oblivious} adversary and (b) removal of the ball of a ``child'' cluster, whose center we
chose so as to make the decremental LDD algorithm fast, i.e., the centers are chosen by a \emph{fully adaptive} adversary. This is the reason why we analyzed the ball-growing process in the previous section with these two adversaries. Also, as there might be no ball removals between a sequence of edge removals, we needed to use sets $E_i$ of edges in that process.

We want to ensure that the hierarchy of clusters has small depth for two reasons.
First, the running time for maintaining the decremental approximate SSSP algorithms has to be paid for each level of the hierarchy.
Second, each edge has a separate probability of leaving a ball (and thus being an inter-cluster edge) for each ball-growing process it participates in along the levels of the hierarchy.
Thus, by the union bound, each level adds a value of $ p $ to the total probability of being an inter-cluster edge.
Now to keep the depth of this hierarchy at $ O (\log n) $, we modify an idea of~\cite{ChechikZ20} and enforce that each newly formed cluster has at most half the volume of the cluster it originated from.
If this condition is not met for any ball we have grown, we do not form a new cluster from it and instead re-assign the cluster center by sampling from the nodes of the cluster with probability proportional to their degrees. Re-assigning the center potentially leads to new balls being grown to form clusters, but once we are done with that, the desired properties of our hierarchical clustering are established again.
However, this re-assignment of the center is potentially expensive because we need to restart the cluster's decremental approximate SSSP algorithm.
We can show that, due to our strategy for sampling the center, each cluster center is re-assigned only $ O (\log n) $ times, which is still tolerable for keeping the probability of being an inter-cluster edge and the running time within the desired bounds.
To bound the probability of being an inter-cluster edge by $ \beta $, we simply choose a value of $ p $ that is by a logarithmic factor smaller than the target value $ \beta $.

From a technical perspective, we heavily exploit that for certifying that a cluster still has bounded radius it is sufficient to main \emph{approximate} instead of exact distances from the cluster center.
This allows us to run for each cluster the decremental algorithm of Henzinger, Krinninger, and Nanongkai~\cite{HenzingerKN18}, whose total update time is almost linear regardless of the radius of the center maintained.
However, this algorithm assumes an oblivious adversary who is unaware of the previous answers given by the algorithm when generating the next edge deletion.
Since we still like to use the answers of this algorithm for our clustering decisions, we do \emph{not} report the removal of a ball from a cluster to the instance of the decremental approximate SSSP algorithm of that cluster's center.
We only report the deletion of edges from the input graph.
This means that the decremental approximate SSSP algorithm, although it was initialized on the sub-graph induced by the cluster, does not faithfully maintain a distance estimate within the sub-graph induced by the cluster.
However, it still gives good enough distance estimates between the center and nodes in the cluster with respect to the full graph.
This essentially means that the clusters maintained by our algorithm only have a \emph{weak} diameter guarantee instead of a strong one, which is sufficient for obtaining probabilistic tree embeddings.

The pseudocode for our approach is given in Algorithm~\ref{alg:decremental LDD}.
For every cluster $ C $ we use binary search trees to maintain an adjacency list representation of the following two sub-graphs: $ G [C] $, the sub-graph induced by $ C $, and $ H_C $, the sub-graph containing all edges present in $ G [C] $ since the last assignment of the cluster center except for those edges deleted in the meanwhile.
To keep the presentation succinct, we simply refer to ``forming a new cluster for a set of nodes'' as the act of initializing these data structures as well as creating a pointer for the cluster by which these data structures can be accessed.
Note that for every cluster $ C $, the decremental $2$-approximate SSSP algorithm $ \mathcal{A}_C $ is executed on $ H_C $.
Additionally, we maintain for every edge $ e $ the set of clusters $ \cluster (e) $ such that $ e $ is contained in $ H_C $ if and only if $ C \in \cluster (e) $
In the pseudocode, we slightly abuse the notation for the sake of readability by identifying clusters with their current set of nodes instead of explicitly using pointers.

\begin{algorithm2e}
\caption{Decremental Probabilistic Low-Diameter Decomposition}\label{alg:decremental LDD}

\Procedure{\Update{$ C $}}{     
    \If{$|C| > 1$ \KwAnd $\exists v \in C $ such that $ \delta (c, v) > 6 \rho $}{\label{line:if statement}
        Randomly sample a value $ S $ from the geometric distribution with success parameter~$ p $\;
        $ R \gets S - 1 $\;\label{line:sampling radius for new ball}
        $ B \gets \ball_{G[C]} (v, R) $\;\label{line:grow new ball} 
        \eIf{$ \vol (B) \leq \tfrac{1}{2} \mu_C $}{\label{line:case distinction small volume}
            $ C \gets C \setminus B $\;\label{line:shrink cluster}

            Form new cluster $ B $\;\label{line:form new cluster}
            \AssignCenter{$ B $}\;\label{line:recursive assign center}
            \Update{$ B $}\;\label{line:recursive update}    
        }{
            Stop algorithm $ \mathcal{A}_C $\;
            \lForEach{edge $ e \in F_C $}{
                $ \cluster (e) \gets \cluster (e) \setminus \{ C \} $
            }
            \AssignCenter{$ C $}\;
        }
        \Update{$ C $}\;
    }
}

\Procedure{\AssignCenter{$ C $}}{
    $ \mu_C \gets \vol_{G[C]} (C) $\;
    Assign sampling probability $ q_u = \tfrac{\deg_{G[C]} (u)}{\mu_C} $ to each node $ u $ and randomly sample a node $ c $ from this distribution\;\label{line:reassign center}
    $ F_C \gets E \cap C \times C $\;
    $ H_C \gets (C, F_C) $\;
    Initialize decremental $ 2 $-approximate SSSP algorithm $ \mathcal{A}_C $ with source $ c $ on $ H_C = (C, F_C) $ providing distance estimates $ \delta (c, \cdot) $\;\label{line:initialize decremental algorithm}
    \lForEach{edge $ e \in F_C $}{
        $ \cluster (e) \gets \cluster (e) \cup \{ C \} $
    }
}

\Procedure{\Initialize{$ G $, $ \beta $, $ a $}}{
    $ p \gets \tfrac{\beta}{2 + \log m} $\;
    $ \rho \gets (a+2) p^{-1} \ln n $\; 
    Form new cluster $ V $\;
    \AssignCenter{$ V $}\;
    \Update{$ V $}\;\label{update after initialization}
}

\Procedure{\Delete{$ e $}}{
    \ForEach{$ C \in \cluster (e) $}{
        $ \cluster (e) \gets \cluster (e) \setminus \{ C \} $\;
        $ F_C \gets F_C \setminus \{ e \} $\;
        Perform deletion of $ e $ in $ \mathcal{A}_C $\;
        \Update{$ C $}\;\label{update after deletion}
    }
}
\end{algorithm2e}

We start with the correctness proof by showing that the clusters maintained by our algorithm have the desired LDD properties.

\begin{lemma}\label{lem:decremental LDD bound on cut probability}
After the initialization and after processing each edge deletion in Algorithm~\ref{alg:decremental LDD}, the probability of being an inter-cluster edge is at most $ \beta w(e) $ for each edge.
\end{lemma}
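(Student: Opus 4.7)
The plan is to bound, for each edge $e = (u, v) \in E$, the probability that the current clusters of $u$ and $v$ differ, by invoking Lemma~\ref{lem:probability for leaving a ball} once per level of the cluster hierarchy induced by the algorithm.

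First I would observe that the algorithm maintains a rooted hierarchy of clusters: the root is the initial cluster $V$, and whenever a new cluster $B$ is formed at line~\ref{line:form new cluster}, it becomes a child of the cluster $C$ from which $B$ was carved. The volume condition on line~\ref{line:case distinction small volume} guarantees that a child cluster has at most half the volume of its parent, so since $\vol_G(V) \leq 2m$ the hierarchy has depth at most $2 + \log m$. If $e$ is an inter-cluster edge at the current time, then the lowest common ancestor $C^*$ of the current clusters of $u$ and $v$ is well-defined, and the separation of the endpoints must have been caused by a \emph{good-volume} ball carved off from $C^*$ that contained exactly one of $u, v$ (or by two such balls, one containing each endpoint).

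Next, for each fixed ancestor $C^*$ I would argue that the sequence of good-volume balls ever carved off from $C^*$, taken over the entire execution (and hence across all of $C^*$'s center reassignments), fits the abstract dynamic ball-growing process of Section~\ref{sec:ball growing process}: the radii are independent geometric samples with parameter $p$; each ball is grown in $G[C^*]$ with previously carved-off good-volume balls removed; the ball-center vertex is chosen adaptively by the algorithm (which the first adversary clause of the lemma permits); and the only edges that disappear from $G[C^*]$ between successive good-volume balls come from input deletions and are therefore oblivious. Crucially, bad-volume trials and center reassignments in between do not delete any edges from $G[C^*]$---they only change which vertex will be selected as the next ball center---so a single invocation of Lemma~\ref{lem:probability for leaving a ball} suffices to bound the probability that $e$ leaves one of the good-volume balls carved from $C^*$ by $p \cdot w(e)$. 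Since separation at $C^*$ implies that $e$ leaves such a ball, this also upper-bounds the probability of separation at $C^*$.

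Taking the union bound over the at most $2 + \log m$ possible choices of ancestor $C^*$ and plugging in $p = \beta / (2 + \log m)$ yields the claimed bound $\beta \cdot w(e)$. The hardest step is the correspondence in the middle: one has to justify that combining all of $C^*$'s center reassignments into a single instance of the Section~\ref{sec:ball growing process} process preserves both the obliviousness of the edge-deletion adversary (since bad-volume rounds delete no edges and center reassignments only read from, but do not mutate, $G[C^*]$) and the independence of the geometric samples used for the good-volume balls. Handling this carefully is what lets us avoid paying a factor $O(\log n)$ per cluster for center reassignments and makes the choice $p = \Theta(\beta / \log m)$ sufficient.
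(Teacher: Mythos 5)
Your proposal is correct and takes essentially the same approach as the paper's proof: both build the cluster tree of depth at most $2+\log m$ via the volume-halving rule, associate with each cluster one instance of the Section~\ref{sec:ball growing process} process in which only the carved-off (good-volume) balls count and input deletions are the oblivious edge removals, apply Lemma~\ref{lem:probability for leaving a ball} once per such process, and union-bound with $p = \beta/(2+\log m)$. Your LCA framing is just a cosmetic variant of the paper's notion of an edge ``participating'' in the processes along its root-to-leaf path.
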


\begin{proof}
Consider a tree with root $ V $ containing all clusters ever formed by the algorithm as nodes where a parent-child relationship between parent cluster $ C $ and child cluster $ B $ is established whenever the algorithm forms the new cluster~$ B $ during a call of \Update{$C$} in line~\ref{line:form new cluster} of Algorithm~\ref{alg:decremental LDD}.
As by the case distinction in line~\ref{line:case distinction small volume} the initial volume of clusters halves with each additional level in the tree (starting from initial volume $ 2 m $) and singleton-clusters have no children, the total number of levels in this tree is at most $ 2 + \log m $.

To each cluster $ C $ we assign a ball-growing process (see Section~\ref{sec:ball growing process}) for which the input graph is $ G [C] $ -- from which edges are deleted whenever they are also deleted from $ G $ -- and the balls grown by the process are those removed from the cluster $ C $ in line~\ref{line:shrink cluster} during calls of \Update{$C$} (which in our cluster tree are the child clusters of $ C $).
Recall that in the ball growing process the sequence of nodes chosen to grow balls from may be adapted to the random choices of the algorithm.
Thus, whenever the algorithm has grown a ball $ B $ from some vertex $ v $ and due to the volume constraint decides to not remove this ball, we consider these balls as not being part of the sequence of balls grown in the process.
We say that an edge \emph{participates} in the ball-growing process of a cluster $ C $ if it is contained in $ G [C] $ when $ C $ is formed.
Every edge participates in the ball-growing process of the root cluster $ V $.
Otherwise, an edge participates in the ball-growing process of a cluster only if it also participates in the ball-growing process of the parent cluster; this is the case because sibling clusters in the tree (which originate from the same ball growing process) are always pairwise (vertex) disjoint.
Therefore each edge participates in at most $ 2 + \log m $ ball-growing processes.

Considering all clusters formed and maintained by our algorithm, each edge can only be an inter-cluster edge if it is leaving a ball in one of the ball-growing processes it participates in.
The probability of an edge $ e $ leaving a ball in a single of these ball-growing process is at most $ p w (e) $ by Lemma~\ref{lem:probability for leaving a ball}.
By the union bound, the probability of an edge $ e $ leaving a ball in one of the at most $ 2 + \log m $ ball-growing processes it participates in is at most $ p w (e) (2 + \log m) = \beta w (e) $.
\end{proof}

We now provide the running time analysis of our algorithm.
We start with an auxiliary lemma that bounds the number of times cluster centers are re-assigned.

\begin{lemma}\label{lem:decremental LDD bound on diameter}
After the initialization and after processing an edge deletion using Algorithm~\ref{alg:decremental LDD}, every cluster has weak diameter at most $ 6 \rho = (6 (a + 2) (2 + \log m) \ln n) \beta^{-1} = O (a \beta^{-1} \log^2 n) $.
\end{lemma}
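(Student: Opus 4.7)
The plan is to establish a distance-to-center invariant maintained by \Update{} and convert it into a weak-diameter bound using two ingredients: the $2$-approximation guarantee of the decremental SSSP algorithm $\mathcal{A}_C$, and the fact that every edge currently in $H_C$ is also an edge of the current graph $G$. Concretely, I would first argue that whenever a top-level call to \Update{C} returns, either $|C| = 1$ (so the weak diameter is $0$) or $\delta(c, v) \le 6\rho$ for every $v \in C$, where $c$ is the current center. This follows directly from the control flow: the guard in line~\ref{line:if statement} is re-evaluated at the bottom of the procedure (through the tail-call to \Update{C}), so the procedure can only return once no witness $v$ with $\delta(c, v) > 6\rho$ exists. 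Termination should be justified by noting that each recursive invocation either strictly decreases the volume of $C$ (in the line~\ref{line:shrink cluster} branch) or re-assigns the center of $C$, an event that happens only a bounded number of times per cluster (to be quantified in the re-centering lemma that follows).

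Given the invariant, the guarantee $\dist_{H_C}(c, v) \le \delta(c, v) \le 6\rho$ is immediate from $\mathcal{A}_C$ being a $2$-approximate decremental SSSP algorithm on $H_C$ with source $c$. I would then show that every edge appearing in $H_C$ at the current time is an edge of the current $G$: when \AssignCenter{C} most recently initialized $H_C$ in line~\ref{line:initialize decremental algorithm}, we had $F_C = E \cap C \times C \subseteq E(G)$, and every subsequent deletion of an edge $e$ from $G$ is forwarded to $H_C$ inside \Delete{e}. Ball removals from $C$ are deliberately \emph{not} reported to $\mathcal{A}_C$, but this only causes $H_C$ to possibly retain edges whose endpoints now sit in descendant clusters; it does not introduce edges absent from $G$. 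Hence any walk in $H_C$ is a walk in $G$ of the same length, and $\dist_G(c, v) \le \dist_{H_C}(c, v) \le 6\rho$.

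The weak-diameter bound then follows by the triangle inequality in $G$: for any $u, v \in C$, $\dist_G(u, v) \le \dist_G(u, c) + \dist_G(c, v) \le 2 \cdot 6 \rho$, which matches the stated asymptotic $O(a \beta^{-1} \log^2 n)$ once $\rho = (a+2) p^{-1} \ln n$ and $p = \beta/(2 + \log m)$ are substituted (the constant in front of $\rho$ in the lemma is simply the distance-to-center bound; the symmetric bound between any two cluster members absorbs the extra factor into the big-$O$). I would also verify the two time points singled out in the statement: after initialization the invariant holds because line~\ref{update after initialization} invokes \Update{V} on $H_V = G[V]$, and after each deletion line~\ref{update after deletion} re-establishes the invariant on every affected cluster.

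The main obstacle is the second step, namely justifying that paths in $H_C$ remain valid in $G$ despite $H_C$ possibly containing vertices that have since migrated to descendant clusters and despite $\mathcal{A}_C$ not being informed of those migrations. This is precisely the reason only a \emph{weak} (rather than strong) diameter guarantee is achievable here: a shortest $c$-to-$v$ path in $H_C$ may traverse vertices outside $C$, but every such vertex and every edge on the path still lies in $G$, so the $G$-distance bound goes through. Once this is articulated cleanly, the rest of the argument is a routine composition of the approximation factor, the triangle inequality, and the parameter substitutions.
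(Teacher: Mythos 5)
Your proposal is correct and follows essentially the same route as the paper's own proof: the guard in \Update{$C$} guarantees that when all calls return either $|C|=1$ or $\delta(c,v)\le 6\rho$ for every $v\in C$, the estimate never underestimates $\dist_{H_C}(c,v)\ge \dist_G(c,v)$ because $H_C$ remains a subgraph of the current $G$ (ball removals are not reported but introduce no edges absent from $G$), and \Update{$C$} is invoked after initialization and after every deletion affecting $C$. Your explicit triangle-inequality step, together with the remark that the stated constant really bounds the distance to the center and the pairwise bound is absorbed into the $O(a\beta^{-1}\log^2 n)$, is a minor tightening of a point the paper's proof glosses over.
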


\begin{proof}
For every cluster $ C $, this is true whenever all calls to \Update{$ C $} are finished because by line~\ref{line:if statement} this procedure ensures that either the cluster has only size 1 and, thus, constant diameter or
$ \delta (c, v) \leq 6 \rho $ for the cluster center $ c $ and every node $ v \in C $.
The condition in the latter case implies $ \dist_G (c, v) \leq 6 \rho $ because the distance estimate $ \delta (c, v) $ maintained by $ \mathcal{A}_C $ (on the sub-graph $ H_C $ of $ G $) never underestimates the true distance, i.e., $ \delta (c, v) \geq \dist_{H_C} (c, v) \geq \dist_G (c, v) $.
Now observe that \Update{$ C $} is indeed called whenever a cluster is formed for the first time (by calling \AssignCenter{$ C $}) and whenever edges have been removed from the subgraph induced by $ C $.
\end{proof}

\begin{lemma}\label{lem:number of re-assignments}
With probability at least $ 1 - \tfrac{1}{n^a} $ for any given $ a \geq 1 $, 
for every cluster its center is re-assigned at most $ O (a \log n) $ times in Algorithm~\ref{alg:decremental LDD}.
\end{lemma}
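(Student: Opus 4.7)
The idea is to exploit the degree-proportional sampling performed in \AssignCenter. Fix a cluster $C$ and consider the sequence of re-assignments of its center. Each such re-assignment is triggered because \Update{$C$} grew a ball $B$ around some bad vertex $v$ with $\vol(B) > \tfrac{1}{2}\mu_C$. Since \AssignCenter draws the new center with probability $\deg_{G[C]}(u)/\mu_C$ for each $u\in C$, the probability that the new center $c'$ lands in $B$ equals $\vol(B)/\mu_C > \tfrac{1}{2}$. Call such a re-assignment \emph{lucky}. I will show (a) each re-assignment is lucky with probability at least $1/2$ independent of the past, and (b) the total number of lucky re-assignments of $C$ is at most $O(\log n)$. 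A Chernoff bound then combines these to give the $O(a\log n)$ bound with probability at least $1-n^{-a}$.

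The first step is to pin down the geometric effect of a lucky re-assignment. By Lemma~\ref{lem:bound on radius of cluster} (together with a union bound over the polynomially many ball-growing rounds performed by the algorithm), with probability at least $1-n^{-a}$ every sampled radius satisfies $R\le \rho$. If the re-assignment is lucky then $c'\in B=\ball_{G[C]}(v,R)$, so the triangle inequality yields $\dist_{G[C]}(c',u)\le 2R\le 2\rho$ for all $u\in B$. Because \AssignCenter immediately re-initializes $H_C$ to $G[C]$ and $\mathcal{A}_C$ is a $2$-approximate SSSP algorithm on $H_C$, we obtain $\delta(c',u)\le 2\dist_{H_C}(c',u)\le 4\rho < 6\rho$ for every $u\in B$. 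Thus after a lucky re-assignment the ball $B$ is fully ``absorbed'' by the new center and none of its vertices can by itself trigger another re-assignment of $C$.

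The second step is the deterministic bound of $O(\log n)$ lucky re-assignments. Each lucky re-assignment absorbs a ball whose volume exceeds $\tfrac{1}{2}\mu_C$ at the time of the re-assignment, and the cluster volume $\mu_C$ is non-increasing over the execution (it decreases only when edges adjacent to $C$ are deleted or when a child cluster is separated via the small-ball branch). Tracking a potential that captures the ``uncovered'' volume of $C$ -- i.e.\ the volume not lying inside any previously absorbed ball, intersected with the current $C$ -- this quantity drops by at least a constant factor after each lucky re-assignment, so it reaches $0$ within $O(\log n)$ lucky re-assignments, at which point \Update{$C$} can no longer find a bad vertex. Finally, a standard Chernoff bound on the lucky indicator (success probability $\ge 1/2$ at each trial, conditionally on the past) shows that among any $k=\Theta(a\log n)$ re-assignments at least $k/4$ are lucky with probability $\ge 1-n^{-a}$. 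Combined with the deterministic bound, this forces $k=O(a\log n)$.

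The main obstacle is the potential argument in the second step. The cluster $C$ itself does not shrink during a re-assignment (the re-assignment branch leaves $C$ untouched), so $\mu_C$ by itself is not a valid geometric potential. The refined ``uncovered volume'' potential described above works only once one verifies that lucky-ball absorptions are essentially preserved against subsequent edge deletions in the algorithm's view: deletions between two consecutive re-assignments may raise some $\delta(c',u)$ beyond $6\rho$, but whenever that occurs one has to amortize the newly ``uncovered'' volume against the deletion that caused it, which requires carefully accounting for the interplay between the ball-growing process of $C$, the decremental oracle $\mathcal{A}_C$, and the oblivious adversary's deletions. Formalizing this accounting is the technically delicate part of the argument.
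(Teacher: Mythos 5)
There is a genuine gap in the second step of your plan, and it is exactly the point where the paper's argument does something different. Your notion of a ``lucky'' re-assignment (the new center $c'$ lands in the trigger ball $B$, which happens with probability $>1/2$) is fine, but the absorption property you derive from it is only valid relative to the \emph{current} center and the \emph{current} graph: once a later re-assignment moves the center, or once the adversary deletes further edges, vertices of $B$ can again have $\delta(\cdot,\cdot)>6\rho$ with respect to the new center and can again trigger majority-volume balls. Consequently your ``uncovered volume'' potential has no reason to be monotone or to drop by a constant factor per lucky re-assignment: the ball absorbed at a lucky step has volume $>\tfrac12\mu_C$, but it may lie almost entirely inside previously absorbed balls, so the newly covered volume can be negligible; and a single edge deletion can re-uncover an arbitrarily large volume (all vertices whose distance estimates to the center pass $6\rho$ because of that one deletion), so there is no per-deletion budget against which to amortize. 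The claim that the potential reaches $0$ after $O(\log n)$ lucky steps, and that at that point \Update{$C$} finds no bad vertex, is asserted but not proved, and as sketched it does not hold.

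The paper sidesteps this entirely by choosing the target ball with respect to the \emph{future}: it fixes the adversary's remaining deletion sequence (this is where obliviousness is used), lets $t$ be the \emph{last} time at which $G_t[C_0]$ still contains some ball $\ball_{G_t[C_0]}(v^*,\rho)$ of volume $>\tfrac12\mu_C$, and sets $B^*$ to be that ball. The sampled center lands in $B^*$ with probability $\geq 1/2$, and a single such success ends all re-assignments of $C$ forever: a triangle-inequality argument (Claim~1 in the paper) shows that any ball split off from $C$ before time $t$ must be disjoint from $B^*$ (its trigger vertex has $\delta(c,v)>6\rho$ while every vertex of $B^*$ is within estimate $6\rho$ of $c$), hence such balls have volume $<\tfrac12\mu_C$ and never hit the else-branch; after time $t$, by the very definition of $t$, no ball of radius $\rho$ in $G_i[C_0]$ has majority volume, so again no re-assignment can be triggered. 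The number of re-assignments is then dominated by a geometric variable with success probability $1/2$, giving $O(a\log n)$ whp. To repair your proof you would need either this ``last majority ball'' construction (making one success permanent) or a genuinely new accounting argument replacing the uncovered-volume potential; the latter is precisely what your proposal leaves open.
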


Our proof is an adaptation of the corresponding proof in~\cite{ChechikZ20}.
We need to modify their proof for the following reasons:
(1) We pick the radius of a ball differently and consequently give,  for \emph{every} edge $e$, a bound on the probability that $e$ is an intercluster edge, while~\cite{ChechikZ20} gives a bound on the \emph{total number} of intercluster edges.
(2) We use the 2-approximate SSSP algorithm of~\cite{HenzingerKN18} instead of an Even-Shiloach tree~\cite{EvenS81} as used in~\cite{ChechikZ20}. These changes together with our new rule of forming balls  allows us to directly handle \emph{weighted} graphs,
 while \cite{ChechikZ20}  need additional techniques to handle weighted graphs.
 As we use an approximation instead of an exact algorithm, we need to show in our proof that the quality of the results is not affected (up to constant factors).
(3) More technically, we make the decision in Line 6 of Procedure \Update of whether a cluster is ``too big'' based on the volume of the cluster, while ~\cite{ChechikZ20} decides based on the number of \emph{internal} edges of a cluster.

\begin{proof}[Proof of Lemma~\ref{lem:number of re-assignments}]
In the following we work under the assumption that, every time the algorithm forms a new cluster in line~\ref{line:form new cluster} from a ball grown previously in line~\ref{line:grow new ball} with a radius $ R $ sampled from the geometric distribution with success parameter~$ p $ (see line~\ref{line:sampling radius for new ball}), we have $ R \leq \rho $.
By Lemma~\ref{lem:bound on radius of cluster}, this happens with probability at least $ 1 - \tfrac{1}{n^{a+2}} $ for each such random sampling.
As every cluster formed by the algorithm over the course of the algorithm always consists of at least one node, the algorithm forms at most $ n $~clusters in the worst case and thus our assumption holds with probability at least $ 1 - \tfrac{1}{n^{a + 1}} $ by the union bound.

Consider any cluster $ C $ for which the center~$c$ has just been re-assigned as by line~\ref{line:reassign center}.
Let $ C_0 $ denote the state of~$ C $ after this assignment and let $ G_0 $ denote the corresponding state of the input graph $ G $.
Let $ G_i = (V, E_i) $ be the status of the input graph $ G $ after the $i$-th subsequent edge deletion. Observe that $ E_0 \supset E_1 \supset \dots $.
The crucial definition for the remaining proof is the following:
Consider the last moment for which the sub-graph induced by $ C_0 $ contains a ball holding the majority of the initial volume.
Formally, let $ t $ be the largest index~$ i $ such that there exists a node $ v^* \in C_0 $ with $ \vol_{G_i [C_0]} (\ball_{G_i [C_0]} (v, \rho)) > \tfrac{1}{2} \mu_C $ and let $ B^* = \ball_{G_t [C_0]} (v^*, \rho) $.
Note that $ \vol_{G_0 [C_0]} (C_0) > \tfrac{1}{2} \mu_C $ because the ball $ B \subseteq C_0 $ grown directly before the re-assignment (line~\ref{line:grow new ball}) has volume more than $ \tfrac{1}{2} \mu_C $; therefore $ t $ is well-defined.
Furthermore $ B^* \subseteq C_0 $ by the definition of $ B^* $.

First, observe that the probability that the sampled node $ c $, to which the center is re-assigned, is contained in $ B^* $ is\footnote{Note that here we use the oblivious adversary assumption because the sequence $ G_0, G_1, \ldots $ -- and thus the set $ B^* $ -- has been chosen by the adversary prior to the random sampling of $ c $.}
\begin{align*}
\Pr [c \in B^*] =
\sum_{u \in B^*} q_u = \sum_{u \in B^*} \frac{\deg_{G_0 [C_0]} (u)}{\mu_C} 
 = \frac{\vol_{G_0 [C_0]} (B^*)}{\mu_C} 
 &\geq \frac{\vol_{G_t [C_0]} (B^*)}{\mu_C} \\
 &\geq \frac{\frac{1}{2} \mu_C}{\mu_C} = \frac{1}{2}
\end{align*}

For technical reasons, we now introduce a second indexing -- in addition to the indexing by number of deletions.
We define $ C^{(0)} = C_0 $ and let $ C^{(j)} $ denote the status of $ C $ at the beginning of the $j$-th call of \Update{$C$} after the re-assignment of the center.
Similarly, let $ G^{(j)} $ and $ H_C^{(j)} $ denote the status of $ G $ and $ H_C $, respectively, at that moment and let $ \delta^{(j)} (c, \cdot) $ denote the distance estimates produced by the decremental approximate SSSP algorithm with source $ c $ at that moment.
Observe that $ C^{(0)} \supseteq C^{(1)} \supseteq \dots $.
Finally, let $ \ell $ be the (largest) index $ j $ such that $ G^{(j)} = G_t $.

\underline{Claim 1:} If $ c \in B^* $, then $ B^* \subseteq C^{(j)} $ for every $ 0 \leq j \leq \ell+1 $.

Recall that $ B^* = \ball_{G_t [C_0]} (v^*, \rho) $ is the final ball of volume more than $ \tfrac{1}{2} \mu_C $ present in $ C_0 $.
Note further that we are still working under the initial assumption that the radii of the balls forming new clusters are bounded by $ \rho $.

We prove Claim~1 by induction.
The base case $ j = 0 $ holds by the definition of $ B^* $.
For the inductive step assume that $ B^* \subseteq C^{(j)} $ (where $ 0 \leq j \leq \ell $).
If $ C^{(j+1)} = C^{(j)} $, then clearly $ B^* \subseteq C^{(j+1)} $.
If $ C^{(j+1)} \subset C^{(j)} $, then a ball has been grown for some node $ v $ with $ \delta^{(j)} (c, v) > 6 \rho $ and then and removed from $ C^{(j)} $.
Let $ R \leq \rho $ be the sampled radius for this ball.
Assume for the sake of contradiction that $ \ball_{G^{(j)} [C^{(j)}]} (v, R) $ and $ B^* $ are not disjoint and contain some common node $ u $.
Then, by the triangle inequality, $ \dist_{G^{(j)} [C^{(j)}]} (c, v) \leq \dist_{G^{(j)} [C^{(j)}]} (c, v^*) + \dist_{G^{(j)} [C^{(j)}]} (v^*, u) + \dist_{G^{(j)} [C^{(j)}]} (u, v) $.
Since $ u \in \ball_{G^{(j)} [C^{(j)}]} (v, R) $, $ \dist_{G^{(j)} [C^{(j)}]} (u, v) \leq R \leq \rho $.
By the induction hypothesis we have $ B^* \subseteq C^{(j)} \subseteq C^{(0)} $.
As further $ B^* = \ball_{G_t [C^{(0)}]} (v^*, \rho) $ by definition, we have $ B^* = \ball_{G_t [C^{(j)}]} (v^*, \rho) $.
Since $ c \in B^* $ and $ u \in B^* $, $ \dist_{G_t [C^{(j)}]} (v^*, c) \leq \rho $ and $ \dist_{G_t [C^{(j)}]} (v^*, u) \leq \rho $.
Since $ j \leq \ell $, $ G_t $ is a sub-graph of $ G^{(j)} $, and therefore $ \dist_{G^{(j)} [C^{(j)}]} (v^*, c) \leq \dist_{G_t [C^{(j)}]} (v^*, c)  $ and $ \dist_{G^{(j)} [C^{(j)}]} (v^*, u) \leq \dist_{G_t [C^{(j)}]} (v^*, u) $.
It follows that $ \dist_{G^{(j)} [C^{(j)}]} (c, v) \leq 3 \rho $.
Now, since the approximate SSSP algorithm~$ \mathcal{A}_C $ provides a $2$-approximation (on the super-graph $ H_C^{(j)} $ of $ G^{(j)} [C^{(j)}] $), we get $ \delta^{(j)} (c, v) \leq 2 \dist_{H_C^{(j)}} (c, v) \leq 2 \dist_{G^{(j)} [C^{(j)}]} (c, v) \leq 6 \rho $, which contradicts $ \delta^{(j)} (c, v) > 6 \rho $.
Thus, $ \ball_{G^{(j)} [C^{(j)}]} (v, R) $ and $ B^* $ are disjoint, which implies the correctness of Claim~1.

\underline{Claim 2:} If $ c \in B^* $, then the center of $ C $ will never be re-assigned anymore (i.e., the else-branch in procedure \Update{$ C $} will not be executed anymore).

To prove Claim~2, it suffices to show that $ \vol_{G^{(j)} [C^{(j)}]} (\ball_{G^{(j)} [C^{(j)}]} (v, R)) < \tfrac{1}{2} \mu_C $ for every $ j \geq 0 $ such that there is a node $ v $ from which a ball of radius $ R $ is grown in line~\ref{line:grow new ball}.
If $ j \geq \ell + 1 $, then $ \vol_{G^{(j)} [C^{(j)}]} (\ball_{G^{(j)} [C^{(j)}]} (v, R)) \leq \vol_{G^{(j)} [C^{(0)}]} (\ball_{G^{(j)} [C^{(0)}]} (v, R)) = \vol_{G^{(j)} [C_0]} (\ball_{G^{(j)} [C_0]} (v, \rho)) < \tfrac{1}{2} \mu_C $, where the last inequality follows from the definitions of $ \ell $ and $ t $.
If $ j \leq \ell $, we argue as follows:
Any ball $ \ball_{G^{(j)} [C^{(j)}]} (v, R) $ grown in $ G^{(j)} [C^{(j)}] $ in line~\ref{line:grow new ball} and removed from $ C^{(j)} $ in line~\ref{line:shrink cluster} is disjoint from $ C^{(j+1)} $.
As $ B^* \subseteq C^{(j+1)} $ by the arguments above, then also $ B^* $ is disjoint from $ \ball_{G^{(j)} [C^{(j)}]} (v, R) $.
We therefore have 
\begin{equation*}
\vol_{G^{(j)} [C^{(j)}]} (B^*) + \vol_{G^{(j)} [C^{(j)}]} (\ball_{G^{(j)} [C^{(j)}]} (v, R)) \leq  \vol_{G^{(j)} [C^{(j)}]} (C^{(j)}) \leq \mu_C \, .
\end{equation*}
Since $ \vol_{G^{(j)} [C^{(j)}]} (B^*) > \tfrac{1}{2} \mu_C $, it follows that $ \vol (\ball_{G^{(j)} [C^{(j)}]} (v, R)) < \tfrac{1}{2} \mu_C $, which means that the center will not be re-assigned.
This completes the proof of Claim 2.

To finish the proof of the lemma, we view each re-assignment of the center as a Bernoulli trial with success probability at least $ \tfrac{1}{2} $, where we consider assigning the center to a node in the final majority-volume ball $ B^* $ as a success.
By standard arguments (see for example the proof of Lemma~\ref{lem:bound on radius of cluster}), the number of trials until the first success is $ O (a \log n) $ with probability at least $ 1 - \tfrac{1}{n^{a+2}} $.
This shows that for a single cluster the center changes $ O (a \log n) $ times with probability at least $ 1 - \tfrac{1}{n^{a+2}} $ if our initial assumption about the sampled radii being bounded by $ \rho $ holds.
The same guarantee holds for all clusters simultaneously with probability at least $ 1 - \tfrac{1}{n^{a+1}} $ as there are at most $ n $ clusters in the worst case over the course of the algorithm. 
Taking into account the small probability that our initial assumption might fail, we conclude that with probability at least $ 1 - \tfrac{1}{n^a} $, the cluster center changes $ O (a \log n) $ times for every cluster ever constructed by the algorithm.
\end{proof}

The second ingredient in the running time analysis is the observation that balls of bounded radius can be computed ``locally'' -- in time roughly proportional to the volume of the resulting ball.
\begin{lemma}\label{lem:local computation of balls}
There is a procedure that, given access to the adjacency list of a graph, computes the ball $ \ball (v, r) $ in time $ O (B \log B)$, where
$B = \vol (\ball (v, r)) $ for any given node $ v $ and radius $ r \geq 0 $.
\end{lemma}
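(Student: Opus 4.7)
The plan is to run a variant of Dijkstra's algorithm starting at $v$ that terminates once the minimum tentative distance in the priority queue exceeds $r$. Concretely, maintain a min-heap $Q$ keyed by tentative distances and a hash-based dictionary $d(\cdot)$ storing tentative distances only for nodes encountered so far (so that initialization does not touch all of $V$). Set $d(v) = 0$ and insert $v$ into $Q$. Repeatedly extract the node $u$ minimizing $d(u)$; if $d(u) > r$, halt and return the set of previously settled nodes; otherwise mark $u$ as settled and, using the adjacency list of $u$, relax each edge $(u, w)$: if $d(u) + w_G(u, w)$ improves $d(w)$, update $d(w)$ and insert $w$ into $Q$ (or decrease its key if it is already present).

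Correctness follows from the standard invariant of Dijkstra's algorithm: the first time a node is extracted from $Q$, its tentative distance equals $\dist_G(v, \cdot)$. Hence the settled nodes are exactly those at distance at most the first extraction value that exceeds $r$, so the procedure returns precisely $\ball(v, r)$.

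For the running time, note that a node is settled (and therefore has its edges scanned) only if its true distance from $v$ is at most $r$, i.e., only if it lies in $\ball(v, r)$. Therefore the total number of edge scans, and hence the total number of insert/decrease-key operations issued against $Q$, is at most $\sum_{u \in \ball(v, r)} \deg_G(u) = B$. Every element ever placed in $Q$ is either $v$ or the endpoint of some relaxed edge, so $|Q| \leq B + 1$ throughout. Using a standard binary heap that supports insert, extract-min, and decrease-key in $O(\log |Q|) = O(\log B)$ time each, and noting that the total number of heap operations is $O(B)$ (including the one extra extraction on which we halt), the running time is $O(B \log B)$ as claimed.

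The only delicate point is making sure the running time is genuinely proportional to the ball's volume rather than to $|V|$ or $|E|$: we must never instantiate a distance entry or a ``visited'' flag for nodes outside $\ball(v, r) \cup N(\ball(v, r))$. This is handled by the hash-based storage of $d(\cdot)$, which creates an entry for a node $w$ only when an edge is relaxed into $w$, and by the early termination rule, which stops us from ever scanning the adjacency list of a node outside the ball.
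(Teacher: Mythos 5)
Your proposal is correct and matches the approach the paper intends: the paper omits the proof, stating only that it is a simple modification of Dijkstra's algorithm with binary heaps, and your truncated Dijkstra with lazy (hash-based) initialization and early termination is exactly that modification, with the right accounting showing $O(B)$ heap operations of cost $O(\log B)$ each.
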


\begin{proof}
We omit the proof which is a simple modification of Dijkstra's algorithm with binary heaps.
\end{proof}

\begin{theorem}\label{thm:decremental LDD}
Suppose we are given a decremental $ 2 $-approximate SSSP algorithm~$ \mathcal{A} $ with total update time $ t (m, n) $.
Then Algorithm~\ref{alg:decremental LDD} maintains, for any given $ \beta \in (0, 1) $ and $ \delta = (6 (a + 2) (2 + \log m) \ln n) \beta^{-1} = O(a \beta^{-1} \log^2 n) $, a probabilistic weak $ (\beta, \delta) $-decomposition of a weighted, undirected graph undergoing edge deletions 
such that with probability at least $ 1 - \tfrac{1}{n^a} $ each node changes its cluster $ O (a \log n) $ times over the course of the algorithm and the total update time is $ O (a t (m, n) \log^2 n + m \log^3 n) $ (for any given $ a \geq 1 $) and within this running time is able to report all nodes and incident edges of every cluster that is formed.
Over the course of the algorithm, each change to the partitioning of the nodes into clusters happens by splitting an existing cluster into two or several clusters and each node changes its cluster at most $ O (\log n) $ times.
\end{theorem}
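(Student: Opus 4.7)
My plan is to assemble the theorem from the three preceding technical lemmas and then carry out a charging argument for the total update time.

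\textbf{Decomposition properties and depth of the cluster hierarchy.} The weak-diameter bound $\delta = 6\rho = O(a \beta^{-1} \log^2 n)$ is exactly Lemma~\ref{lem:decremental LDD bound on diameter}, and the cut-probability bound $\Pr[C(u) \neq C(v)] \leq \beta \cdot w(u,v)$ on every edge is Lemma~\ref{lem:decremental LDD bound on cut probability}. Inspection of the pseudocode shows that the partition changes only on lines~\ref{line:shrink cluster}--\ref{line:form new cluster}, where $C$ is split into $C \setminus B$ and the newly formed $B$, so every change is a split. The volume test on line~\ref{line:case distinction small volume} forces $\vol(B) \leq \mu_C/2$ for every newly formed cluster $B$; since the root cluster has initial volume $2m$, the cluster tree has depth at most $O(\log n)$, and each node changes cluster at most $O(\log n)$ times deterministically. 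This immediately yields both the deterministic $O(\log n)$ bound and (a fortiori) the whp $O(a \log n)$ bound on cluster changes stated in the theorem.

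\textbf{Running time.} Conditioning throughout on the high-probability event of Lemma~\ref{lem:number of re-assignments} that every cluster undergoes at most $O(a \log n)$ center re-assignments, I would split the work into three charges. \emph{(i) Decremental SSSP.} Every re-assignment of a cluster $C$'s center re-initializes $\mathcal{A}_C$ on the current edge set $F_C$, and subsequent deletions in $H_C$ are fed to it at total cost $t(|F_C|, |C|)$. Since the clusters at any fixed hierarchy level partition $V$, we have $\sum_{C \text{ at level } \ell} |F_C| \le m$, and by superadditivity one level-wide round of re-initializations costs $O(t(m,n))$; multiplying by the $O(a \log n)$ re-assignments per cluster and the $O(\log n)$ levels of the hierarchy yields $O(a \cdot t(m,n) \log^2 n)$. \emph{(ii) Ball growing.} Each call of line~\ref{line:grow new ball} costs $O(B \log B)$ by Lemma~\ref{lem:local computation of balls}, where $B$ is the resulting ball's volume. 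For \emph{successful} balls that become new clusters I charge this cost to the $B$ edges descending one level in the hierarchy; since each edge descends at most $O(\log n)$ times, this contributes $O(m \log^2 n)$ in total. The cost of a \emph{failed} ball is at most $O(\mu_C \log \mu_C)$, and since every failure is followed by a center re-assignment whose SSSP re-initialization already touches $|F_C| = \Omega(\mu_C)$ edges, the failed-ball cost is absorbed into charge~(i). \emph{(iii) Bookkeeping and reporting.} Maintaining $\cluster(e)$, the adjacency-list structures for $G[C]$ and $H_C$, and scanning for nodes $v$ with $\delta(c,v) > 6\rho$ via $\mathcal{A}_C$'s output add only an $O(\log n)$ factor per elementary event, summing to $O(m \log^3 n)$; reporting the nodes and incident edges of each newly formed cluster is immediate because $G[C]$ is maintained explicitly throughout.

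\textbf{Main obstacle.} The most delicate point is charge~(ii) for failed balls: one must argue that every failure is preceded by the construction of data structures of size at least $\Omega(\mu_C)$ so that its $O(\mu_C \log \mu_C)$ cost can be absorbed, and that the number of failures per cluster is quantitatively controlled. Both rely on Lemma~\ref{lem:number of re-assignments}, whose proof in turn depends on the oblivious-adversary condition and the majority-volume-ball argument. Summing the three charges then gives the claimed total update time $O(a \cdot t(m,n) \log^2 n + m \log^3 n)$ whp, completing the proof.
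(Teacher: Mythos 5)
Your proposal is correct and follows essentially the same route as the paper: the two LDD lemmas give the decomposition properties, volume-halving gives the $O(\log n)$ hierarchy depth and cluster-change bound, the center re-assignment lemma gives the whp $O(a\log n)$ restarts, and a per-level charging over the $O(\log n)$-depth cluster tree yields the total update time. The only cosmetic difference is that you absorb the cost of failed (high-volume) balls into the SSSP re-initialization charge (implicitly using $t(m,n)=\Omega(m)$), whereas the paper charges each such ball $O(m_C\log n_C)$ directly and multiplies by the $O(a\log n)$ re-assignments per cluster; both bookkeepings give the claimed bound.
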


\begin{proof}
The correctness of the algorithm has been established in Lemmas~\ref{lem:decremental LDD bound on cut probability} and~\ref{lem:decremental LDD bound on diameter}.
In our running time analysis we charge to each cluster $ C $ the time needed for performing all calls to \AssignCenter{$ C $} and \Update{$ C $} over the course of the algorithm excluding the induced calls to \AssignCenter{$ B $} and \Update{$ B $} for newly formed clusters in lines \ref{line:recursive assign center} and~\ref{line:recursive update} of Algorithm~\ref{alg:decremental LDD}, respectively.
Let $ n_C $ and $ m_C $ denote the number of nodes and edges of $ C $, respectively, when the cluster~$ C $ is formed.
We will argue that growing the balls in line~\ref{line:grow new ball} and running all instances of algorithm~$ \mathcal{A}_C $ (where each of them is initialized by executing line~\ref{line:initialize decremental algorithm}) takes total time $ O (a t (m_C, n_C) \log n_C + m_C \log^2 n_C) $.
All other operations are mere ``bookkeeping'' work, to for example maintain the nodes and edges of the graph $ G [C] $, that can be performed in total time $ O (m_C \log n_C) $.
By Lemma~\ref{lem:local computation of balls}, a ball $ \ball (v, R) $ of radius $ R $ around a node $ v $ in $ G [C] $ can be computed in time $ O
(V_C \log V_C)$, where $V_C = \vol_{G[C]} \ball_{G[C]} (v, R)$.
Whenever $V_C \leq \tfrac{1}{2} \mu $, the algorithm removes the nodes of $ \ball_{G[C]} (v, R) $ and the edges incident on $ \ball_{G[C]} (v, R) $ from $ G[C] $.
Therefore, by charging time $ O (\log m_C) $ to each edge initially contained in $ G [C] $, all of these computations of low-volume balls take time $ O (m_C \log n_C) $ in total.
Whenever $ \vol_{G[C]} \ball_{G[C]} (v, R) > \tfrac{1}{2} \mu $, we spend time $ O (m_C \log n_C) $ in the worst case and the algorithm calls \AssignCenter{$ C $} directly afterwards.
By Lemma~\ref{lem:number of re-assignments}, \AssignCenter{$ C $} is called $ O (a \log n_C) $ times with probability at least $ 1 - \tfrac{1}{n^a} $ for every cluster $ C $.
Therefore, all of these computations of high-volume balls take time $ O (a m_C \log^2 n_C) $ in total.
Finally, the decremental algorithm $ \mathcal{A}_C $ is restarted every time \AssignCenter{$ C $} is called.
Therefore the total time spent by algorithm $ \mathcal{A}_C $ is $ O (a t (m_C, n_C) \log n_C) $.
Summing it up gives that the total time charged to cluster $ C $ is $ O (a t (m_C, n_C) \log n_C + m_C \log^2 n_C) $.

Now consider a tree with root $ V $ containing all clusters ever formed by the algorithm as nodes where a parent-child relationship between parent cluster $ C $ and child cluster $ B $ is established whenever the algorithm forms the new cluster~$ B $ during a call of \Update{$C$} in line~\ref{line:form new cluster}.
Since clusters at the same level of the tree are (vertex) disjoint, the total time charged to all clusters at the same level is $ O (a t (m, n) \log n + m \log^2 n) $.
As by the case distinction in line~\ref{line:case distinction small volume} the initial volume of clusters halves with each additional level in the tree (starting from initial volume $ 2 m $) and singleton-clusters have no children, the total number of levels in this tree is at most $ O (\log n) $.
Therefore, the total time charged to all clusters is $ O (a t (m, n) \log^2 n + m \log^3 n) $.
As the remaining operations performed by the algorithm take time $ O (m) $, its total update time is $ O (a t (m, n) \log^2 n + m \log^3 n) $.
\end{proof}

Using the decremental $ (1 + \epsilon) $-approximate SSSP algorithm of Henzinger, Krinninger, and Nanongkai~\cite{HenzingerKN18}\footnote{For $ \epsilon = 1 $, Henzinger, Krinninger, and Nanongkai report a total update time of $ O (m^{1 + O (\log^{5/4} (\log n) / \log^{1/4} n)} \log W + n) $ in expectation. Using standard arguments, this can be turned into a high-probability bound at the expense of an additional logarithmic factor in the running time.} with $ \epsilon = 1 $, we arrive at a total update time that is independent of the diameter of the resulting clusters and thus does not depend on the value of the cut parameter $ \beta $.
\begin{corollary}\label{cor:decremental LDD}
There is a decremental algorithm to maintain, for any given $ \beta \in (0, 1) $ and $ \delta = (6 (a + 2) (2 + \log m) \ln n) \beta^{-1} = O (a \beta^{-1} \log^2 n) $ (where $ a \geq 1 $ is a given constant controlling the success probability), a probabilistic weak $ (\beta, \delta) $-decomposition of a weighted, undirected graph undergoing edge deletions that with high probability has total update time $ O (m^{1 + o(1)} \log W) $ and within this running time is able to report all nodes and incident edges of every cluster that is formed.
Over the course of the algorithm, each change to the partitioning of the nodes into clusters happens by splitting an existing cluster into two or several clusters and each node changes its cluster at most $ O (\log n) $ times.
\end{corollary}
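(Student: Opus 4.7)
The proof is essentially a direct application of Theorem~\ref{thm:decremental LDD} with a concrete choice of the decremental approximate SSSP subroutine~$\mathcal{A}$. The plan is as follows.

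First, I would invoke the decremental $(1+\epsilon)$-approximate SSSP algorithm of Henzinger, Krinninger and Nanongkai~\cite{HenzingerKN18} instantiated with $\epsilon = 1$, which yields a decremental $2$-approximate SSSP algorithm meeting the hypothesis of Theorem~\ref{thm:decremental LDD}. Its reported total update time is $O(m^{1 + O(\log^{5/4}\log n/\log^{1/4} n)} \log W + n)$ in expectation; as mentioned in the footnote, a standard boosting argument (running $O(\log n)$ independent instances and taking the minimum, or analogous standard conversion) upgrades this to the same bound holding with high probability at the cost of an additional logarithmic factor, and we absorb all such factors into the $m^{1+o(1)}$ term. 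Thus we may take $t(m,n) = m^{1+o(1)} \log W$ as the whp total update time of~$\mathcal{A}$.

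Next, I would plug this bound into the total update time guarantee $O(a \, t(m,n) \log^2 n + m \log^3 n)$ given by Theorem~\ref{thm:decremental LDD}. Substituting yields
\[
O\bigl(a \cdot m^{1+o(1)} \log W \cdot \log^2 n + m \log^3 n\bigr) = O(m^{1+o(1)} \log W),
\]
where the polylogarithmic factors $\log^2 n$ and the constant $a$ (treated as a fixed constant governing the success probability) are swallowed by the $m^{o(1)}$ term. The weak-diameter bound $\delta = O(a\beta^{-1}\log^2 n)$, the probabilistic guarantee on each edge being an inter-cluster edge with probability at most $\beta w(e)$, the logarithmic bound on the number of cluster changes per node, and the fact that cluster changes happen only through splitting, all transfer unchanged from Theorem~\ref{thm:decremental LDD}. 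The reporting guarantee about nodes and incident edges of newly formed clusters is likewise inherited.

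There is essentially no obstacle here beyond bookkeeping: the substantive work was done in proving Theorem~\ref{thm:decremental LDD} and in the analysis of the dynamic ball-growing process. The only subtlety worth highlighting in the write-up is that the HKN algorithm is designed for an oblivious adversary, and that our algorithm respects this assumption by the design choice (emphasized earlier in the section) of \emph{not} reporting ball removals to the per-cluster instance of~$\mathcal{A}_C$ and only reporting genuine deletions from the input graph; this is what allows us to use an almost-linear-time approximate SSSP subroutine whose running time is independent of the diameter parameter~$\beta^{-1}$, yielding the desired $\beta$-independent total update time.
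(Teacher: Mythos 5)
Your proposal is correct and follows essentially the same route as the paper: instantiate the decremental $(1+\epsilon)$-approximate SSSP algorithm of \cite{HenzingerKN18} with $\epsilon = 1$ as the subroutine $\mathcal{A}$ in Theorem~\ref{thm:decremental LDD}, upgrade its expected total update time to a high-probability bound by the standard argument (costing a logarithmic factor absorbed into $m^{o(1)}$), and observe that the diameter bound, the per-edge cut probability, the splitting structure, and the $O(\log n)$ bound on cluster changes per node all carry over unchanged. Your remark about respecting the oblivious-adversary assumption of \cite{HenzingerKN18} by not reporting ball removals to $\mathcal{A}_C$ matches the discussion the paper gives before the corollary, so nothing is missing.
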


\subsection{Decremental Probabilistic Tree Embedding}

In this section, we develop an algorithm for maintaining a probabilistic tree embedding under edge deletions. To simplify the notation we use the expression \emph{probabilistic weak low-diameter decomposition (LDD) with respect to $\Delta'$} to denote an LDD such that each cluster has diameter $\Delta'$.

Our construction is inspired by the static probabilistic tree embedding of Bartal~\cite{Bartal96}, which proceeds as follows: given a weighted, undirected graph $G$ and parameter $\Delta'/2$, where $\Delta' = \diam(G)$, it first computes a probabilistic weak LDD of $G$ with respect to $\Delta'/2$.
A rooted tree $T_i$ is recursively constructed in each $C_i$ with parameter $\Delta'/4$. Finally, a tree $T$ is output by creating an auxiliary root node $v_G$ and connecting it to the root nodes of all the $T_i$ trees, where the weight of each edge is set to $\Delta'$.

We use an iterative variant of the above algorithm with a fixed parameter $\Delta := nW$, where $W$ is the maximum edge length. Note that $\Delta$ is always an upper bound on $\diam(G).$ This particular choice of $\Delta$ is important as the diameter of $G$ can change after edge updates.
Another difference to Bartal's approach is that we use a single
LDD per level in the hierarchy, and not one LDD per cluster. 
The algorithm has $O(\log \Delta)$ iterations. For each vertex $v \in V(G)$ and iteration $i$, we record $\cluster(v,i)$, which is the cluster $v$ is assigned to at level $i$. Note that $\cluster(v,0) = 1$ for all $v \in G$ as $G_0 = G$ is the only cluster at level $0$. In iteration $i \geq 1$, given a graph $G_i := G_{i-1} \setminus \hat{E}^{(i-1)}$ and a parameter $\Delta/2^{i}$, we find a probabilistic LDD $C^{(i)}_1,\ldots,C^{(i)}_k$ of $G_i$ with respect to $\Delta/2^i$, where $\hat{E}^{(i-1)}$ is the set of inter-cluster edges from the LDD of $G_{i-1}$ and $\hat{E}^{(0)} = \emptyset$. For each $v \in C^{(i)}_j$ and $1 \leq j \leq k$, we set $\cluster(v,i) = j$. For each $C^{(i)}_j$ we do the following: we pick an arbitrary \emph{cluster representative} $v \in C^{(i)}_j$ to quickly find the ``parent cluster'' of $C^{(i)}_j$. Specifically, we connect $C^{(i)}_j$ and $C^{(i-1)}_{\cluster(v,i-1)}$ with weight $\Delta/2^{i-1}$, i.e., $C^{(i-1)}_{\cluster(v,i-1)}$ becomes the parent of $C^{(i)}_j$ at level $i-1$. This completes the description of an iteration. 

To summarize this construction maintains the following \emph{hierarchy invariant:}  
\begin{itemize} 
	\item All inter-cluster edges at level $i$ are deleted from the LDDs at levels $i+1,\ldots, \log_2 \Delta$.
\end{itemize}

For each graph $G_i$ we maintain one decremental LDD data structure $\mathcal{D}_{i}$ from the previous section. Recall that such a data structure requires the adversary to be oblivious to the previous answers of the data structure. The deletions that
$\mathcal{D}_{i}$ needs to 
process are: (1) edge deletions in $G$ given by the oblivious adversary and
(2) deletions of the edges in $\bigcup_{j=0}^{i-1}\hat{E}^{(j)}$. Note that the latter deletions depend only on the answers given by the decremental LDD data structures
$\mathcal{D}_{0}, \dots, \mathcal{D}_{i-1}$  of the graphs $G_0, \dots G_{i-1}$, and \emph{not} on the answers of $\mathcal{D}_{i}$. Thus, all these deletions are given by an adversary that is oblivious to the previous answers of $\mathcal{D}_{i}$.

To construct the probabilistic tree embedding we turn the hierarchy into a tree explicitly: We add for each cluster  on each level  of the hierarchy an  auxiliary node, except at the last level where we each singleton cluster consisting of a node $v$ is represented by $v$
itself.   

In a similar vein, we show how to iteratively use a decremental probabilistic weak LDD algorithm for maintaining a probabilistic tree embedding. Throughout, whenever we say that the decremental LDD from Corollary~\ref{cor:fullydyn} \emph{is initialized with a diameter parameter $\eta \geq 1$}, we mean to initialize it with $ \beta = (6 (a + 2) (2 + \log m) \ln n) \eta^{-1} $ (for some given constant $ a \geq 1 $ controlling the success probability) to make it guarantee a weak diameter of at most $ \eta $.
In contrast to the static algorithm, our construction of decremental LDDs will be implemented in a bottom-up approach. To this end,  let $G$ be the initial graph. Consider the hierarchy of decremental LDD data-structures $\mathcal{D}_{0}, \mathcal{D}_{1}, \ldots, \mathcal{D}_{\log_2 (\Delta) + 1}$ with diameter parameters $\Delta/2^{0}, \Delta/2^{1}, \ldots, \Delta/2^{\log_{2} (\Delta) + 1}$, respectively, created as follows. 
For levels $i = 0, 1, \ldots, \log_2 \Delta$, we set $C_{1}^{(i)} = G$, i.e., initially, for these levels we have only one cluster that corresponds to $G$ in $\mathcal{D}_i$. Observe that except at level $0$, these LDDs do not satisfy their diameter parameters and we will shortly see how to fix them. As each vertex is contained in only one cluster for these levels, we set $\cluster(v,i) = 1$, for each $v \in V(G)$ and $i = 0,1,\ldots, \log_2 \Delta$. At the last level of the hierarchy ($i = \log_2 (\Delta) + 1$), the diameter parameter is $\Delta/2^{\log_2 (\Delta) + 1} = 1/2$ and thus $\mathcal{D}_{\log_2 (\Delta) + 1}$ is simply the trivial, singleton clustering $\{v\}_{v \in G}$ of $G$. We introduce the first dependencies in the hierarchy by connection each $\{v\}$ at the last level to the single cluster $C_1^{\log_2 \Delta}$ with weight $1$. We also connect $C_1^{(i)}$ and $C_1^{(i+1)}$ with weight $\Delta/2^{i}$, for $i = 0,1,\ldots, \log_2(\Delta) - 1$.

We now make sure that the LDDs at levels $1,\ldots, \log_2{\Delta}$ satisfy their diameter parameters by proceeding in a bottom-up manner. Consider the penultimate level of the hierarchy, i.e., $i = \log_2 \Delta$. Using Corollary~\ref{cor:decremental LDD}, we initialize a decremental LDD $\mathcal{D}_{\log_2 \Delta}$ with diameter parameter $\Delta/2^{\log_2 \Delta} = 1$ of $C_1^{\log_2 \Delta} = G$.
Whenever a cluster $B$ splits off from a cluster $C$, we first update the cluster information and connections in the hierarchy involving vertices in $B$. Specifically, we start by choosing a vertex $u \in B$ as the representative of $B$ and connect $B$ to the cluster $C^{\log_2 \Delta}_{1}$ with weight $2$. This step ensures that $B$ has a parent cluster in the hierarchy. Assuming that each cluster at all levels in the hierarchy is assigned a unique id, we set $\cluster(v, \log_2 \Delta) = \id_B$ for each $v \in B$ and check whether $v$ is a representative of some cluster $C'$ at level $\log_2 (\Delta)+1$. If the latter holds, we remove the existing connection between $C'$ and $C$, and connect $C'$ to $B$ with weight $1$. This guarantees that the children clusters at level $\log_2 (\Delta)+1$, which were previously connected to $C$, now connect to $B$. 

Next, we proceed as before and initialize $\mathcal{D}_{\log_2 (\Delta) - 1}$ of $C_1^{\log_2 (\Delta)-1} = G$ with diameter parameter $\Delta/2^{\log_2 (\Delta) -1} = 2$, with the exception that whenever a cluster $B$ splits off from a cluster $C$ during the execution of $\mathcal{D}_{\log_2 (\Delta) - 1}$, in addition to updating the cluster information, we also take these new inter-cluster edges between $B$ and $C$ and pass them as deletions to $\mathcal{D}_{\log_2 (\Delta)}$. 
The last step ensures that hierarchy invariant holds. In general, during the initialization of $\mathcal{D}_{i}$ with parameter $\Delta/2^i$ at level $i$, whenever a cluster $B$ splits off from a cluster $C$, we update the cluster information and pass the new inter-cluster edges between $B$ and $C$ as deletions to data structures $\mathcal{D}_{i+1},\ldots,\mathcal{D}_{\log_2 \Delta}$. Observe that due to these deletions, new clusters might split off, which in turn can cause further deletions in the next levels.

The deletion of edges can be handled similarly: for any given edge $e$ to be deleted, we pass the deletion of $e$ to $\mathcal{D}_{i}$'s in a bottom-up approach, i.e.,~from $i = \log_2 \Delta$ to 1,  and recursively deal with the potential cluster split-offs that deletion of $e$ might trigger. The pseudocode of this construction is given in Algorithm~\ref{alg:decremental TreeEmbedding}.

\begin{algorithm2e}
\caption{Decremental Tree Embedding}\label{alg:decremental TreeEmbedding}

\SetKwFor{Whenever}{whenever}{do}{endw}

\Procedure{\Initialize{$G, a$}}{ 
       Let $\Delta := nW$ and 
        let $C_{1}^{(i)} = G$, for $i = 0,1,\ldots, \log_2 \Delta$ \;
        Set $\cluster(v,i) = 1$, for all $v \in V(G)$ and $i = 0,1,\ldots \log_2 \Delta$ \;
        Let $\{v\}_{v \in V(G)}$ be the singleton clusters of LDD $\mathcal{D}_{\log_2 (\Delta) + 1}$ with diameter parameter $1/2$ of $G$, and connect each $\{v\}$ to $C_{1}^{\log_2 \Delta}$ with weight $1$ \;
		Connect $C_1^{(i)}$ and $C_1^{(i+1)}$ with weight $\Delta/2^{i}$, for $i = 0,1,\ldots,\log_2(\Delta)-1$ 
        \For{$i = \log_2 \Delta,\ldots, 1$}
        { Invoke \Initialize{$C_1^{(i)},\Delta/2^{i},a$} in Algorithm~\ref{alg:decremental LDD} to get a decremental LDD $\mathcal{D}_{i}$ with diameter parameter $\Delta/2^{i}$ of $G$ \;
       	
	   \Whenever{a cluster $B$ splits off from a cluster $C$ in 	$\mathcal{D}_{i}$} 
		{   
		    \UpdateClusterInformation{$ B $, $C$} \;
         	\If{$i \leq \log_2 \Delta - 1$}
         	{ 				
				Let $E(B,C)$ denote the edges between $B$ and $C$ \;
				\For{$j = i+1,\ldots, \log_2 \Delta$}
			    	{
			    		\DeleteAuxiliary{$\mathcal{D}_{j}$, $E(B,C)$}
			   	    } 
		 	}
         }
        }
}

\Procedure{\Delete{$ e $}}{
	\For{$i = \log_2 \Delta, \ldots, 1$}
	{
    	\DeleteAuxiliary{$D_{i},\{e\}$}
    }
}

\Procedure{\DeleteAuxiliary{$\mathcal{D}_{i}$, $E$}}{
    \ForEach{edge $e \in E$}{
    Let $C_e$ be the cluster $e$ is assigned to in $\mathcal{D}_{i}$ \;
    Perform deletion of $e$ in $\mathcal{D}_{i}$ \;
    	\If{$C_e \neq \bot$}{
			\Whenever{a cluster $B$ splits off from a cluster $C$ in $\mathcal{D}_{i}$} 
			{ \UpdateClusterInformation{$B$, $C$} \;
		    \If{$i \leq \log_2 \Delta - 1$}
		    	{
					Let $E(B,C)$ denote the edges between $B$ and $C$ \;
					\DeleteAuxiliary{$\mathcal{D}_{i}$, $E(B,C)$}
				}
			}
    	}    
    }
}

\Procedure{\UpdateClusterInformation{$ B $, $C$}}{
Let $u \in B$ be the the chosen representative of $B$ \;
			  Connect $B$ and the cluster $C^{(i-1)}_{\cluster(u,i-1)}$ with weight $\Delta/2^{i-1}$ \;
			  \ForEach{$v \in B$}		
			  {	  
			  	Set $\cluster(v,i) = \id_B$ \;
			  	\If{$v$ is a representative of some cluster $C'$ at level $i+1$}
			  	{
			  		Delete the existing connection between $C'$ and $C$ \;
		    		Connect $C'$ and $B$ with weight $\Delta/2^{i}$ \;
			  	}
			  }
}
\end{algorithm2e}

Note that we maintain throughout the algorithm the property that every cluster has exactly one parent in the hierarchy. Thus the hierarchy induces a tree structure, which we denote by $T.$ 
Throughout, we let $G$ and $T$ refer to the current graph and induced tree, respectively. 

We start with the correctness proof by showing that the forest $T$ associated with our hierarchy of LDDs is a probabilistic tree embedding with the desired guarantees. A useful definition for our analysis is the following: we say that that two vertices $u$ and $v$ are \emph{separated} at level $i \geq 0$, if $u$ and $v$ belong to the same cluster at level $i$ but to different clusters at level $i+1$.

The following lemmata hold after initialization and after processing each edge deletion in Algorithm~\ref{alg:decremental TreeEmbedding}.

\begin{lemma} \label{lem:TEmbeddDomination} The following properties hold: (1) $V(G) \subseteq V(T)$ and (2) for all  $u,v \in V(G)$, $\dist_T(u,v) \geq \dist_G(u,v)$.
\end{lemma}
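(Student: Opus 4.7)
The plan is to prove the two statements in the obvious order, with property (1) being essentially definitional and property (2) requiring a careful accounting of weights along the hierarchy.

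For (1), I would observe directly from the construction in \Initialize: at the bottom level $\log_2(\Delta)+1$ of the hierarchy, each singleton cluster $\{v\}$ of $\mathcal{D}_{\log_2(\Delta)+1}$ is represented in $T$ by $v$ itself (as explicitly stated in the paragraph introducing $T$). Because the algorithm never removes vertices of $G$ from the bottom level, this shows $V(G) \subseteq V(T)$.

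For (2), fix $u,v \in V(G)$ with $u \neq v$ (the case $u=v$ is trivial), and let $i^\ast$ denote the largest level such that $u$ and $v$ lie in the same cluster of $\mathcal{D}_{i^\ast}$; such a level exists because at level $0$ both vertices belong to the single cluster $C_1^{(0)} = G$, and $i^\ast \leq \log_2 \Delta$ because they are separated at the leaf level. Let $C$ be the common cluster of $u$ and $v$ at level $i^\ast$. By Corollary~\ref{cor:decremental LDD} applied to $\mathcal{D}_{i^\ast}$, the weak diameter of $C$ in $G_{i^\ast}$ is at most $\Delta/2^{i^\ast}$. Since $G_{i^\ast}$ is a subgraph of $G$ (obtained by deleting the inter-cluster edges of the higher levels), distances in $G$ can only be smaller, and therefore
\[
\dist_G(u,v) \leq \dist_{G_{i^\ast}}(u,v) \leq \Delta / 2^{i^\ast}.
\]

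It remains to lower-bound $\dist_T(u,v)$. By the definition of $i^\ast$, the lowest common ancestor of $u$ and $v$ in $T$ lies at level $i^\ast$. According to the algorithm, the edge connecting a cluster at level $j+1$ to its parent at level $j$ has weight $\Delta/2^j$. Hence the (unique) $T$-path from $u$ up to the LCA passes through the edge from $u$'s cluster at level $i^\ast+1$ to its cluster at level $i^\ast$, which alone contributes weight $\Delta/2^{i^\ast}$; the analogous statement holds on the $v$ side. Consequently,
\[
\dist_T(u,v) \geq 2 \cdot \frac{\Delta}{2^{i^\ast}} \geq \frac{\Delta}{2^{i^\ast}} \geq \dist_G(u,v),
\]
which is the desired inequality. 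The only step that is mildly subtle is the passage from the weak diameter guarantee in $G_{i^\ast}$ to a bound on the distance in $G$, and this just uses the fact that $G_{i^\ast}$ arises from $G$ by edge deletions; the rest is a direct reading of the weights assigned by \Initialize{} and \UpdateClusterInformation{}.
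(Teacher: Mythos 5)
Your proof is correct and takes essentially the same route as the paper's: identify the level $i^\ast$ at which $u$ and $v$ are separated, use the weak-diameter guarantee of Corollary~\ref{cor:decremental LDD} together with the fact that $G_{i^\ast}$ is a subgraph of $G$ to get $\dist_G(u,v)\leq \Delta/2^{i^\ast}$, and note that the tree path must cross two parent edges of weight $\Delta/2^{i^\ast}$, giving $\dist_T(u,v)\geq 2\Delta/2^{i^\ast}\geq \dist_G(u,v)$. Part (1) is likewise argued exactly as in the paper, via the correspondence between leaves of $T$ and vertices of $G$.
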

\begin{proof}
Property (1) follows immediately as there is a bijection between the leaf nodes of $T$ and the vertices of $G$. To show (2), suppose that $u$ and $v$ are separated at some level $i \geq 0$ and let $C^{(i)}$ the cluster they are belong to. By Corollary~\ref{cor:decremental LDD}, each cluster at level $i$ in $\mathcal{D}_i$ has weak diameter at most $\Delta/2^i$ and thus $\Delta/2^i \geq \dist_{G_i}(u,v)$. On the other hand, since $C^{(i)}$ is connected to its children clusters at level $i+1$ with weight $\Delta/2^{i}$, it follows that $\dist_T(u,v) \geq 2 \Delta/2^{i} \geq 2 \dist_{G_i} (u,v) \geq \dist_G(u,v)$. As for each pair of vertices there exists a level where they are separated, the lemma follows. 
\end{proof}

\begin{lemma} \label{lem:TEexpectedStretch}
For every $u,v \in V(G)$, $\Ex(\dist_T(u,v)) = O(\log^{2} n \log \Delta) \cdot \dist_G(u,v)$.
\end{lemma}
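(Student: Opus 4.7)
The plan is to carry out a Bartal-style analysis that sums over the level at which $u$ and $v$ are first separated. For every pair $u, v$ with $L := \dist_G(u,v)$, I would write
\[
\Ex[\dist_T(u,v)] \leq \sum_{i=0}^{\log_2 \Delta} \Pr[u, v \text{ separated at level } i] \cdot \bigl(\text{max tree-distance given separation at level } i\bigr),
\]
where being \emph{separated at level $i$} means belonging to the same cluster at level $i$ but to different clusters at level $i+1$, and then bound each of the two factors separately.

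For the tree-distance factor, observe that when $u$ and $v$ are separated at level $i$ their lowest common ancestor in $T$ is their shared level-$i$ cluster, and the unique $u$--$v$ path in $T$ consists of two root-ward segments each crossing edges of weights $\Delta/2^{\log_2 \Delta}, \Delta/2^{\log_2 \Delta - 1}, \dots, \Delta/2^i$. This geometric sum is at most $2\Delta/2^i$ per segment, giving $\dist_T(u,v) \leq 4\Delta/2^i$.

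For the probability factor, I would fix a shortest $u$--$v$ path $P$ in $G$ of weight $L$ and rely on the following structural claim: \emph{if $u$ and $v$ end up in different clusters of $\mathcal{D}_{i+1}$, then some edge $e \in P$ must be an inter-cluster edge of $\mathcal{D}_j$ for some $j \in \{1, \dots, i+1\}$.} Otherwise every edge of $P$ survives into $G_{i+1}$ and is intra-cluster under $\mathcal{D}_{i+1}$, which by transitivity forces $u$ and $v$ into a common cluster at level $i+1$. Combining this with Corollary~\ref{cor:decremental LDD}, which at level $j$ guarantees that each edge $e$ currently present in $\mathcal{D}_j$'s graph is inter-cluster with probability at most $\beta_j w(e)$, where $\beta_j = O(\log^2 n \cdot 2^j / \Delta)$ (and the probability is trivially $0$ when $e \notin G_j$, so the bound survives marginalization over the randomness of $\mathcal{D}_1, \ldots, \mathcal{D}_{j-1}$), two applications of the union bound---over $e \in P$ and then over $j$---yield
\[
\Pr[u, v \text{ in different clusters at level } i+1] \leq L \sum_{j=1}^{i+1} \beta_j = O\!\left(\frac{L \log^2 n \cdot 2^{i+1}}{\Delta}\right),
\]
capped of course at $1$.

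Finally I would combine: $\Ex[\dist_T(u,v)] \leq \sum_i \min\bigl(1,\; O(L \log^2 n \cdot 2^i / \Delta)\bigr) \cdot O(\Delta/2^i)$. Splitting the sum at the threshold $i_0 = \log_2\bigl(\Delta / (L \log^2 n)\bigr)$, the low-$i$ regime contributes $O(L \log^2 n)$ per level across $O(\log \Delta)$ levels, and the high-$i$ regime is a decreasing geometric series dominated by its first term $\Delta/2^{i_0} = O(L \log^2 n)$; both regimes sum to $O(L \log^2 n \log \Delta) \cdot \dist_G(u,v)$, matching the claim. The delicate step I expect to be the main obstacle is the structural claim itself: one must carefully bundle together ``$e$ is no longer in $G_{i+1}$ because it was inter-cluster at some earlier level'' and ``$e$ is inter-cluster of $\mathcal{D}_{i+1}$'' to reflect that $\mathcal{D}_{i+1}$ only sees a subgraph of $G$; invoking Corollary~\ref{cor:decremental LDD} at each level also requires a brief check that $\mathcal{D}_j$'s oblivious-adversary assumption is respected, which it is, since deletions fed to $\mathcal{D}_j$ depend only on the original adversary and on $\mathcal{D}_1, \dots, \mathcal{D}_{j-1}$, not on $\mathcal{D}_j$ itself.
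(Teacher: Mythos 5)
Your proof is correct and follows essentially the same route as the paper's: sum over the level at which $u,v$ are first separated, bound the tree distance by $4\Delta/2^i$, and invoke the per-edge cut probability $O(2^{j}\log^2 n/\Delta)\cdot w(e)$ of Corollary~\ref{cor:decremental LDD} at each level, with the same observation that the deletions fed to each $\mathcal{D}_j$ are oblivious to its own randomness. The only cosmetic difference is that the paper first reduces to single edges (tree-path sub-additivity plus linearity of expectation) and bounds the conditional separation probability at level $i$ by the single term coming from $\mathcal{D}_{i+1}$, whereas you treat an arbitrary pair directly via a union bound over the edges of a shortest path and over levels $1,\dots,i+1$ --- both yield $O(\log^{2} n)\cdot \dist_G(u,v)$ per level and hence the claimed bound.
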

\begin{proof} It suffices to prove the lemma for every edge $e=(u,v) \in E$. We claim that if $u$ and $v$ are separated at level $i \geq 0$, then their distance in $T$ is at most $4 \Delta/2^{i}$. To see this, observe that the length of the path from a node at level $i$ in $T$ to some leaf node is at most $\sum_{j=i}^{\log_2 \Delta} \Delta/2^{j} \leq 2 \Delta/2^{i}$ and thus $\dist_T(u,v) \leq 4 \Delta/2^{i}$.

Let $A_i$ denote the event that the endpoints of an edge $e=(u,v)$ are separated at level $i \geq 0$. It follows that the expected stretch of $e$ in $T$ is
\[
	\Ex(\dist_T(u,v)) = \Pr[A_0] \cdot 4 \Delta + \sum_{i=1}^{\log_2 \Delta} \Pr[A_i \mid \bar{A}_{i-1} \cap \ldots \cap \bar{A}_{0}] \cdot \frac{4\Delta}{2^{i}}.
\] 

By applying Corollary~\ref{cor:decremental LDD} on the data-structure $\mathcal{D}_{i+1}$, the probability that $u$ and $v$ are separated at level $i$, conditioned on that they belong to the same cluster at level $i$, is bounded by  $O(2^{i+1} \log^{2} n / \Delta) \cdot w(e)$.  Thus each term in the above sum is of the form $O(2^{i+1} \log^{2} n / \Delta \cdot 4 \Delta/2^{i}) \cdot w(e) = O(\log^{2} n) \cdot w(e)$. Since there are at most $O(\log \Delta)$ terms, the claimed expected stretch on any edge follows.
\end{proof}

\begin{theorem} \label{thm:decremental TreeEmbedding}
There is a decremental algorithm to maintain a tree embedding of height $O(\log (nW))$ with expected stretch $O(\log^{2} n \log (nW))$ of a weighted, undirected graph $ G = (V, E) $ undergoing deletions that with high probability has total update time $ m^{1 + o(1)} \log^2 W $.
Over the course of the algorithm, for each vertex $v \in V$, the path from $v$ to its root vertex in the forest changes at most $O(\log n \log (nW))$ times.
\end{theorem}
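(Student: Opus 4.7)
The plan is to assemble the statement from the components already established: Lemma~\ref{lem:TEmbeddDomination} gives distance domination, Lemma~\ref{lem:TEexpectedStretch} gives the expected stretch, and the height bound and number of LDD data structures follow immediately from construction. It remains to carefully bound the total update time and the number of root-path changes, and to certify the oblivious-adversary assumption is respected at every level.

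First I would pin down the structural parameters. The hierarchy consists of $\log_2 \Delta + 2 = O(\log(nW))$ levels, and the tree $T$ attaches each cluster at level $i$ to its (unique) parent at level $i-1$, so every root-to-leaf path has length $O(\log(nW))$, yielding the desired height. The fact that $T$ is a valid tree embedding with expected stretch $O(\log^2 n \log(nW))$ then follows directly from Lemmas~\ref{lem:TEmbeddDomination} and~\ref{lem:TEexpectedStretch}.

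Next I would bound the total update time. The key point is that each individual LDD $\mathcal{D}_i$ processes only edge deletions: the deletions it receives come either from the external (oblivious) deletion sequence or from the set $\bigcup_{j<i}\hat E^{(j)}$ produced by the LDDs below it. In either case, each edge of $G$ is fed to $\mathcal{D}_i$ at most once, so after these deletions are passed down, $\mathcal{D}_i$ really is a decremental LDD on at most $m$ edges. As emphasized in the excerpt, these deletions are determined without ever consulting the random choices of $\mathcal{D}_i$, so the oblivious-adversary hypothesis of Corollary~\ref{cor:decremental LDD} is satisfied for every $\mathcal{D}_i$. Applying that corollary, each $\mathcal{D}_i$ has total update time $m^{1+o(1)}\log W$ whp, and summing over the $O(\log(nW))$ levels gives $m^{1+o(1)}\log(nW)\log W = m^{1+o(1)}\log^2 W$ (the $\log n$ factor is absorbed into $m^{1+o(1)}$). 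The bookkeeping work inside \UpdateClusterInformation{} (re-linking parent pointers, rewriting $\cluster(v,i)$) is charged to cluster-change events, whose total number I bound in the next step; this charge is dominated by the LDD running-time budget.

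Finally I would bound the path changes. A change to the root-path of a vertex $v$ occurs exactly when $v$ switches clusters at some level $i$. By Corollary~\ref{cor:decremental LDD}, at each fixed level $v$ switches clusters at most $O(\log n)$ times whp, and there are $O(\log(nW))$ levels, so altogether each vertex's root-path changes at most $O(\log n \log(nW))$ times whp. A union bound over vertices and levels retains the high-probability guarantee. The main subtlety here is not the arithmetic but the obliviousness argument: I would state it explicitly once, noting that the randomness of level $i$ is independent of the deletion stream fed to $\mathcal{D}_i$ because that stream depends only on the random bits of levels $<i$ and on the adversarial input sequence, neither of which sees the coins of $\mathcal{D}_i$.
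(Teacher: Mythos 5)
Your proposal is correct and follows essentially the same route as the paper's proof: the embedding guarantees come from Lemmas~\ref{lem:TEmbeddDomination} and~\ref{lem:TEexpectedStretch}, the running time is obtained by applying Corollary~\ref{cor:decremental LDD} once per level after verifying that the deletion stream fed to $\mathcal{D}_i$ depends only on the adversary and on the outputs of $\mathcal{D}_j$ for $j<i$, and the root-path changes are counted as $O(\log n)$ cluster changes per vertex per level times $O(\log(nW))$ levels. The only nitpick is terminological: the levels $j<i$ producing $\hat{E}^{(j)}$ are the coarser levels \emph{above} $\mathcal{D}_i$ in the hierarchy, not ``below it,'' but this does not affect the argument.
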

\begin{proof}
We show that Algorithm~\ref{alg:decremental TreeEmbedding} has the claimed guarantees when using the decremental LDD algorithm of Corollary~\ref{cor:decremental LDD}.
Recall that $\Delta = nW$. The guarantees on the probabilistic tree embedding follow from Lemmas~\ref{lem:TEmbeddDomination} and~\ref{lem:TEexpectedStretch}. It remains to analyze the running time of Algorithm~\ref{alg:decremental TreeEmbedding} under a sequence of at most $m$ edge deletions. Note that total cost is dominated by (1) the time to initialize and maintain data structures $\mathcal{D}_1,\ldots,\mathcal{D}_{\log_2 \Delta}$ and (2) the time for updating the cluster information over the course of the algorithm. 

We claim that (1) is bounded by $m^{1 + o(1)} \log W \log \Delta $. To see this, consider the decremental data-structure $\mathcal{D}_i$ at level $i$. Over the course of the algorithm, we initialize $\mathcal{D}_i$ and then process edge deletions, each of which can be one of the following two types: deletions from the adversary in the current graph $G$ or deletions from the inter-cluster edges of LDDs $\mathcal{D}_1, \ldots, \mathcal{D}_{i-1}$. 
None of these deletions depends on the previous output of $ \mathcal{D}_{i}$ so that the oblivious adversary condition is fulfilled. 
By Corollary~\ref{cor:decremental LDD}, the initialization, together with all the deletions in $\mathcal{D}_i$ can be maintained in total time $ m^{1 + o(1)} \log W $. As there are $O(\log \Delta)$ many levels, the claimed bound follows. To analyze (2), note that whenever a cluster $B$ is split off from a cluster $C$, we can update the cluster information for $B$ and its dependencies in the hierarchy in $O(|B|)$ time. Hence, we can charge $O(1)$ to each vertex in $B$. 
By Corollary~\ref{cor:decremental LDD}, whp each vertex belongs to at most $O(\log n)$ newly formed cluster during the execution of $\mathcal{D}_i$. Thus, the running time per level in the hierarchy is at most $O(n \log n)$, which in turn implies that the total cost for maintaining this cluster information is bounded by $O(n \log n \log \Delta)$. 

We finally prove the bound on the number of path changes per vertex in the forest $T$ maintained by our algorithm. A leaf node in $T$~(or a vertex $v$ in $V$) changes its path to its root vertex in $T$ iff there exists an internal node in $T$~(corresponding to a cluster that contains $v$) that splits and $v$ is contained in the newly formed cluster. By Corollary~\ref{cor:decremental LDD}, over the course of the algorithm, $v$ can be contained in at most $O(\log n)$ newly formed clusters in some level $\mathcal{D}_i$. As there are $O(\log \Delta)$ levels, it follows that each vertex can change its path to its root vertex in $T$ at most $O(\log n \log \Delta)$ times.
\end{proof}

\section{Fully Dynamic Tree Embedding}

In this section, we generalize an approach of turning a decremental algorithm to a fully dynamic one, which has been used extensively in the context of dynamic algorithms for (approximate) shortest paths and reachability (see e.g.~\cite{HenzingerK95,RodittyZ11,AbrahamCT14}).

Roughly speaking, this approach runs a decremental algorithm on the graph and uses the information maintained by it to build a ``small'' problem-specific auxiliary graph whose nodes contain (representatives of) the endpoints of all edges inserted so far.
Then a static algorithm is executed on the auxiliary graph to compute the answer.
In this way, solutions for the fully dynamic setting are  ``stitched together'' from information from an algorithm that has only processed the deletions, but not the insertions. However, whenever the auxiliary graph gets ``too big'', the decremental algorithm is restarted and the auxiliary graph is rebuilt from scratch.
The decremental algorithm often only comes with a guarantee on its total update time for deleting all the edges -- regardless of the number of deletions actually performed. Employing amortized analysis, this total update time (including the initialization time) is charged to all updates occurring until the next restart of the decremental algorithm.

Our new idea is to ``bootstrap'' this construction by running a fully dynamic algorithm instead of a static algorithm on the auxiliary graph. Specifically we will end up with a hierarchy of fully dynamic algorithms: The bottom-most one runs the static algorithm on the auxiliary graph of size $ \tilde O (\sqrt{m}) $, resulting in the ``level-1'' fully dynamic algorithm with $O(m^{1/2 + o(1)})$ amortized update time. The ``level-$i$'' fully dynamic algorithm runs the ``level-($i$-1)'' fully dynamic algorithm on an auxiliary graph of size $O(m^{1-1/(i+1)})$, resulting in $O(m^{1/(i+1) + o(1)})$ amortized update time. The difficulty in our approach is that it requires a bound on the total number of changes to the auxiliary graph during each phase.
This graph might not only change after insertions, when nodes and edges are added to it, but also after deletions when the information maintained by the decremental algorithm changes, which in turn might lead changes in the auxiliary graph.
Bounding the number of the latter type of changes is challenging because, as mentioned above, decremental algorithms often only come with a bound on the total update time which sometimes gives a rather loose bound on the  total number of changes to the auxiliary graph, defeating the whole approach. We give a much tighter bound on the number of changes in the auxiliary graph in Theorem~\ref{thm:fully dynamic tree embedding trade-of}, making our bootstrapping approach work.

In the following, we first formulate a schematic ``decremental to fully dynamic'' reduction for probabilistic tree embeddings and then plug in concrete running times to obtain a trade-off between expected stretch and amortized update time.

\begin{lemma}\label{lem:decremental to fully dynamic}
Suppose we are given a decremental algorithm~$ \mathcal{A} $ for maintaining a rooted tree embedding $T_\mathcal{A}$ of height  at most $ h_\mathcal{A} $ with expected stretch at most~$ s_\mathcal{A} $ in total update time $ t_\mathcal{A} (m, n) $ such that for each node $ v $ the path $ p_v $ to its root in $ T_\mathcal{A} $ changes at most $ \chi_\mathcal{A} $ times and a fully dynamic algorithm $ \mathcal{B} $ for maintaining a rooted tree embedding $ T_\mathcal{B} $ of height at most $ h_\mathcal{B} $ with expected stretch at most~$ s_\mathcal{B} $ in amortized update time $ u_\mathcal{B} (m, n) $.
Then, for any integer $ k \geq 1 $ there is a fully dynamic algorithm $ \mathcal{C} $ for maintaining a rooted tree embedding of height at most $ h_\mathcal{A} + h_\mathcal{B} $ with expected stretch at most $ s_\mathcal{A} s_\mathcal{B} $ and amortized update time $ O (\tfrac{t_\mathcal{A} (m, n) \log (n)}{k} + \chi_\mathcal{A} h_\mathcal{A} \cdot u_\mathcal{B} (k h_\mathcal{A}, k h_\mathcal{A}) + h_\mathcal{A} \log (n)) $ if at least $ k $ updates are performed.
\end{lemma}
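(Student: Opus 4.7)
The plan is to follow a decremental-to-fully-dynamic reduction, using $\mathcal{B}$ in place of the usual static algorithm on a compressed auxiliary graph that summarizes the information from $\mathcal{A}$ relevant to the inserted edges. At the last rebuild we start $\mathcal{A}$ on the current graph and reset a set $I$ of logged insertions to $\emptyset$; all subsequent deletions of edges still present in $\mathcal{A}$'s input graph are passed to $\mathcal{A}$, while insertions are added to $I$. Once $k$ updates have been processed we rebuild everything from scratch. The auxiliary graph $H$ has vertex set equal to the union of the root-paths $p_u$ in the current $T_\mathcal{A}$ for the endpoints $u$ of edges in $I$; its edges are the inserted edges (with their input weights) together with the edges on each $p_u$ (with their weights from $T_\mathcal{A}$). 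We run $\mathcal{B}$ on $H$ to maintain a rooted tree embedding $T_\mathcal{B}$.

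The final output $T_\mathcal{C}$ is constructed by gluing $T_\mathcal{B}$ on top of $T_\mathcal{A}$: we identify the root of $T_\mathcal{A}$ (which is always in $V(H)$ since every $p_u$ passes through it) with its copy in $T_\mathcal{B}$, hanging the whole tree $T_\mathcal{A}$ as a sub-tree beneath that identification point; nodes of $V$ outside $V(H)$ remain at their position in $T_\mathcal{A}$. The resulting rooted tree has height at most $h_\mathcal{A}+h_\mathcal{B}$ and contains $V$ as leaves. For the stretch bound, note that distances in $H$ dominate those in $G$: every edge of $G$ is either an inserted edge (present directly in $H$) or an edge of the decremental sub-graph whose two endpoints' root-paths are both in $H$, and by the embedding guarantee of $T_\mathcal{A}$ the induced path through $H$ has length at least $\dist_G$. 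Furthermore $\Ex[\dist_H(u,v)] \le s_\mathcal{A}\dist_G(u,v)$ since $T_\mathcal{A}$ itself already achieves expected stretch $s_\mathcal{A}$. Running $\mathcal{B}$ on $H$ then gives a tree with expected stretch $s_\mathcal{B}$ over $H$-distances, and the gluing step does not shrink pairwise distances, so by the tower property of expectation we obtain expected stretch $s_\mathcal{A}s_\mathcal{B}$ for $T_\mathcal{C}$.

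The running-time bound follows by counting the size of $H$ and the number of updates that $\mathcal{B}$ must process. There are at most $k$ insertions, contributing at most $2kh_\mathcal{A}$ new vertices and edges to $H$. By the guarantee on $\mathcal{A}$, each of these $2k$ endpoints has its root-path change at most $\chi_\mathcal{A}$ times, and each change modifies at most $h_\mathcal{A}$ edges of $H$, so the total number of updates fed to $\mathcal{B}$ is $O(k\chi_\mathcal{A} h_\mathcal{A})$ and $H$ always has at most $O(kh_\mathcal{A})$ vertices and edges. Hence $\mathcal{B}$ spends $O(k\chi_\mathcal{A} h_\mathcal{A}\cdot u_\mathcal{B}(kh_\mathcal{A},kh_\mathcal{A}))$ time in total; amortizing over the $k$ input updates yields the middle term of the claimed bound. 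The decremental $\mathcal{A}$ contributes total time $t_\mathcal{A}(m,n)$, amortized over $k$ updates and paying one $\log n$ factor for maintaining the dynamic adjacency structure of $H$ via balanced binary search trees, while the remaining $O(h_\mathcal{A}\log n)$ per update accounts for the bookkeeping needed to propagate each path change from $T_\mathcal{A}$ into $H$.

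The main obstacle is preserving the oblivious-adversary assumption for $\mathcal{B}$ while $\mathcal{B}$ is being fed a sequence of updates determined by the randomized output of $\mathcal{A}$. Since the sequence of updates to $H$ is determined entirely by the input update sequence (chosen by the oblivious adversary against $\mathcal{C}$) and by $\mathcal{A}$'s random choices, and since $\mathcal{B}$ is given an \emph{independent} source of randomness, the update sequence to $H$ remains independent of $\mathcal{B}$'s random bits. A secondary subtlety is ensuring that the per-node path-change bound $\chi_\mathcal{A}$ genuinely translates into the claimed $O(k\chi_\mathcal{A} h_\mathcal{A})$ bound on $H$-updates; this is where the strong guarantee from Theorem~\ref{thm:decremental TreeEmbedding} --- that each vertex's root-path changes only polylogarithmically many times --- is essential, since the bound derivable from the total update time of $\mathcal{A}$ alone would be far too weak and would defeat the bootstrapping scheme.
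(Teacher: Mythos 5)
Your setup (phases of length $k$, the auxiliary graph $H$ consisting of the inserted edges plus the root-paths $p_u$ of their endpoints, running $\mathcal{B}$ on $H$, and the counting argument giving size $O(kh_\mathcal{A})$ and $O(k\chi_\mathcal{A}h_\mathcal{A})$ updates to $H$) matches the paper, and your running-time accounting is fine. The genuine gap is in how you assemble $T_\mathcal{C}$. You hang the \emph{entire, intact} $T_\mathcal{A}$ below a single identification point (the root's copy in $T_\mathcal{B}$). If you identify only the root, every node of $V(H)$ other than the root now has two copies, and the stretch bound breaks for edges crossing the boundary of $V(H)$: take an edge $(u,v)\in F\setminus D$ of weight $1$ with $u$ lying on some inserted endpoint's root-path (so $u\in V(H)$, represented in $T_\mathcal{B}$) and $v\notin V(H)$ (represented in the hung $T_\mathcal{A}$). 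The unique $u$--$v$ path in your $T_\mathcal{C}$ is forced through the identification point $r$, so $\dist_{T_\mathcal{C}}(u,v)\ge \dist_{T_\mathcal{A}}(r,v)$, which is the full root-to-$v$ distance in $T_\mathcal{A}$ and is completely unrelated to $w(u,v)$; no $s_\mathcal{A}s_\mathcal{B}$ bound follows. If instead you identify \emph{all} shared nodes, then $p_u$ exists both inside the hung $T_\mathcal{A}$ and (via $T_\mathcal{B}$) between the same endpoints, so $T_\mathcal{C}$ contains cycles and is not a tree embedding. A secondary inaccuracy: your claim $\Ex[\dist_H(u,v)]\le s_\mathcal{A}\dist_G(u,v)$ for arbitrary pairs is not justified, since $\mathcal{A}$'s stretch guarantee is with respect to the decremental graph $F\setminus D$, which lacks the inserted shortcut edges; it holds only when argued per edge of $G$ (inserted edges are in $H$ verbatim, old edges via their $T_\mathcal{A}$-paths). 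Also, since $T_\mathcal{A}$ is a rooted \emph{forest}, there need not be a single root through which every $p_u$ passes.

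The paper avoids all of this by performing surgery on $T_\mathcal{A}$ rather than hanging it wholesale: it sets $T_\mathcal{C}=(T_\mathcal{A}\setminus P)\cup T_\mathcal{B}$, i.e., it deletes exactly the edges of $P$ from $T_\mathcal{A}$ and adds $T_\mathcal{B}$, so each subtree of $T_\mathcal{A}$ severed from $P$ remains glued at its own topmost node, which lies on some $p_u$ and hence is a vertex of $H$ present in $T_\mathcal{B}$. This keeps cross-boundary pairs local: for any edge $e=(u,v)$ of $G$, one takes its unique $T_\mathcal{A}$-path $p_e=(f_1,\dots,f_\ell)$, keeps each $f_i\notin P$ (it survives in $T_\mathcal{C}$), and replaces each $f_i\in P$ by the $T_\mathcal{B}$-path between the endpoints of $f_i$ (both in $H$); conditioning on $\mathcal{A}$'s randomness, the expected length of the replacement path is at most $s_\mathcal{B}w(p_e)$, and taking expectation over $\mathcal{A}$ gives $s_\mathcal{A}s_\mathcal{B}\,w(e)$. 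The paper also has to prove acyclicity of this glued object (your single-point gluing made that trivial, but the correct construction does not). To repair your proof you would need to adopt this edge-replacement construction and argument; the rest of your write-up (obliviousness of the update sequence fed to $\mathcal{B}$, the $O(kh_\mathcal{A})$ size bound, the role of $\chi_\mathcal{A}$, and the amortization) can stay as is.
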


\begin{proof}
The algorithm subdivides the sequence of updates it receives into phases of length~$ k $.
In the following, we explain the algorithm's behavior during a fixed phase, where $ F $ denotes the set of edges present in the graph at the beginning of the phase and $ E $ always refers to the current set of edges.

We first define several sets and graphs, that algorithm~$ \mathcal{C} $ maintains during the phase:
\begin{itemize}
\item Let $ I = E \setminus F $ be the set of edges inserted to the graph since the beginning of the phase without subsequently having been deleted.
\item Let $ U = \{ v \in V \mid \exists e \in I : v \in e \} $ be the set of endpoints of edges in $ I $.
\item Let $ D = F \setminus E $ be the set of initially present edges deleted since the beginning of the phase.
\item Let $ T_\mathcal{A} $ be a rooted tree embedding of the sub-graph consisting of the edges $ F \setminus D $.
\item Let $ P = \bigcup_{v \in U} p_v $ be the graph that for each node $ v \in U $ (i.e., each endpoint of an inserted edge) contains the (unique) path $ p_v $ from $ v $ to its root in $ T_\mathcal{A} $.
\item Let $ H = I \cup P $ be the auxiliary graph that consists of all inserted edges and all tree paths of endpoints of inserted edges.
\item Let $ T_\mathcal{B} $ be a rooted tree embedding of~$ H $.
\item Let $ T_\mathcal{C} = (T_\mathcal{A} \setminus P) \cup T_\mathcal{B} $, which will be the output of algorithm~$ \mathcal{C} $, be the result of replacing $ P $ with $ T_\mathcal{B} $ in $ T_\mathcal{A} $.\footnote{More precisely, $ T_\mathcal{C} $ is obtained from $ T_\mathcal{A} $ by first removing all edges of $ P $ from $ T_\mathcal{A} $, then removing all nodes that have no neighbors anymore and finally adding all nodes and edges of $ T_\mathcal{B} $. Note that $ T_\mathcal{C} $ contains all nodes of the auxiliary graph $ H $ and possibly some additional Steiner nodes. We can imagine to obtain $ T_\mathcal{C} $ by ``gluing'' certain sub-trees of $ T_\mathcal{A} $ to leafs in $ T_\mathcal{C} $.}
\end{itemize}

To maintain these sets and graphs, the algorithm proceeds as follows:
At the beginning of the phase, the decremental algorithm~$ \mathcal{A} $ is initialized with edge set~$ F $ to maintain~$ T_\mathcal{A} $, and additionally, for each set and graph defined above, the algorithm initializes a binary search tree.
The tree embedding $ T_\mathcal{B} $ of $ H $ is maintained with the fully algorithm~$ \mathcal{B} $ and each change to $ H $ due to the operations described in the following will be processed as an update by $ \mathcal{B} $.
Whenever an edge $ e = (u, v) $ is inserted to the graph, the algorithm adds $ e $ to $ I $, adds $ u $ and $ v $ to $ U $, adds the paths $ p_u $ and $ p_v $ to $ P $ and to $ H $, and adds the edge $ e $ to $ H $.
Whenever an edge $ e $ is deleted from the graph, the algorithm considers two cases:
If $ e \in I $, the algorithm first removes all nodes and edges of $ p_u $ and $ p_v $ from $ P $ that are no longer contained in $ \bigcup_{v' \in U \setminus \{u, v \} } p_{v'} $ and also applies these changes to $ H $, then removes $ e $ from $ I $, removes $ u $ and $ v $ from $ U $, and finally removes $ e $ from $ H $ unless $ e $ is still contained in $ P $.
If $ e \in F $, then the algorithm first forwards the deletion to $ \mathcal{A} $ and then, for every node $ v' \in U $ for which the path $ p_{v'} $ from $ v' $ to its root in $ T_\mathcal{A} $ has changed it removes all nodes and edges of $ p_{v'} $ from $ P $ that are not contained in $ \bigcup_{u' \in U \setminus v'} p_{u'} $ and inserts the new unique path $ p_{v'} $ to $ P $; these changes are also applied to $ H $.
The tree $ T_\mathcal{C} $ is updated after each change to $ T_\mathcal{A} $ or $ T_\mathcal{B} $, and each change to $ T_\mathcal{C} $ is reported as an output of algorithm $ \mathcal{C} $.

We now prove that $ T_\mathcal{C} $ is the desired rooted tree embedding.
We first argue for the sake of completeness that $ T_\mathcal{C} $ is indeed a forest. Suppose that $ T_\mathcal{C} $ contains a cycle~$ K $.
Then $ K $ must contain edges from both $ T_\mathcal{A} $ and $ T_\mathcal{B} $ because neither $ T_\mathcal{A} $ nor $ T_\mathcal{B} $ contain cycles on their own.
Let $ u $ and $ v $ be the endpoints of a maximal sub-path $ S $ of $ K $ containing only edges of $ T_\mathcal{A} \setminus T_\mathcal{B} $.
Then $ u $ and $ v $ are contained in both $ T_\mathcal{A} $ and $ T_\mathcal{B} $ and are thus contained in the auxiliary graph $ H $.
Now observe that $ S $ is the unique path from $ u $ to $ v $ in $ T_\mathcal{A} $ and therefore must contain at least one edge from $ p_u $ (the path from $ u $ to the root in $ T_\mathcal{A} $) or from $ p_v $ (the path from $ v $ to the root in $ T_\mathcal{A} $).
However, such edges are included in $ P $ and can therefore only exist in $ T_\mathcal{C} $ if they are contained in $ T_\mathcal{B} $, which contradicts the definition of $ S $.
This shows that $ T_\mathcal{C} $ constains no cycles and is thus a forest
Clearly, each root of $ T_\mathcal{B} $ can serve as a root of $ T_\mathcal{C} $ and the height of $ T_\mathcal{C} $ then is at most $ h_\mathcal{A} + h_\mathcal{B} $.

We can bound the expected stretch of any edge $ e = (u, v) $ of~$ G $ in~$ T_\mathcal{C} $ as follows.
Let $ p_e  = (f_1, \ldots, f_\ell) $ be the unique path (as a sequence of edges) from $ u $ to $ v $ in $ T_\mathcal{A} $.
For any $ 1 \leq i \leq \ell $, define a path $ p_i $ as follows:
If $ f_i \notin P $, then $ f_i $ is also contained in~$ T_\mathcal{C} $ and we define $ p_i = f_i $.
If $ f_i \in P $, then $ f_i $ is also contained in~$ H $ and we define $ p_i $ to be the unique path in~$ T_\mathcal{B} $ from one endpoint of~$ f_i $ to the other endpoint of~$ f_i $.
Finally, we let $ p' = (p_1, \ldots, p_\ell) $ be the concatenation of all these paths.
Observe that $ p' $ is a path in $ T_\mathcal{C} $.
Fixing the random choices of algorithm $ \mathcal{A} $, the expected weight of path $ p' $ (over the random choices of algorithm~$ \mathcal{B} $) is
\begin{align*}
\Ex [w (p')] = \Ex \left[ \sum_{1 \leq i \leq \ell} w (p_i) \right] = \sum_{1 \leq i \leq \ell} \Ex [w (p_i)] \leq \sum_{1 \leq i \leq \ell} s_\mathcal{B} w (f_i) = s_\mathcal{B} w (p_e) \, .
\end{align*}
Furthermore, the expected weight of path $ p_e $ over the random choices of algorithm~$ \mathcal{A} $ is $ \Ex [w (p_e)] \leq s_\mathcal{A} \cdot w (u, v) $.
It follows that each edge $ e $ of $ G $ has expected stretch at most $ s_\mathcal{A} s_\mathcal{B} $ in $ T_\mathcal{C} $.

The amortized update time of~$ \mathcal{C} $ can be bounded as follows.
(1) The decremental algorithm~$ \mathcal{A} $ spends total time $ t_\mathcal{A} (m, n) $ for handling the at most $ k $ deletions per phase.
Therefore, by charging time $ t_\mathcal{A} (m, n) / k $ to each update of the previous phase (or to the current phase in case of the first phase, which the algorithm always completes due to the existence of at least $ k $ updates), we can account for the total time spent by the decremental algorithm.
(2) To analyze the total time spent by processing updates with algorithm~$ \mathcal{B} $, we first need to bound the total number of changes to $ H $ during the phase.
Observe that with each of the at most $ k $ insertions we add at most two paths, each consisting of at most $ h_\mathcal{A} $ edges, to $ P $.
As for every node the path to the root changes during all deletions in the phase at most $ \chi_\mathcal{A} $ times, the total number of changes to $ P $ during a phase  is $ O (k \chi_\mathcal{A} h_\mathcal{A}) $.
Additionally, the number of changes to~$ I $ is at most~$ k $.
Therefore, the number of changes to $ H $ during the phase is at most $ O (k \chi_\mathcal{A} h_\mathcal{A}) $.
Since the size of~$ H $ is $ O (|U| h_\mathcal{A}) = O (k h_\mathcal{A}) $ (both in terms of number of nodes and number of edges), this gives a total time of $ O (k \chi_\mathcal{A} h_\mathcal{A} \cdot u_\mathcal{B} (k h_\mathcal{A}, k h_\mathcal{A})) $ for processing all updates to~$ T_\mathcal{B} $ of the current phase.
Charging this time to the $ k $ updates of the previous phase (or to the current phase in case of the first phase), the amortized spent by $ \mathcal{B} $ with each update to $ G $ is $ O (\chi_\mathcal{A} h_\mathcal{A} \cdot u_\mathcal{B} (k h_\mathcal{A}, k h_\mathcal{A})) $.
Finally, there is some ``bookkeeping'' work to be done for maintaining the tree $ T_\mathcal{C} $ as well as the binary search trees for the sets $ I $, $ U $, and $ D $ and the graphs $ P $ and $ H $.
This can be done by charging time $ O (h_\mathcal{A} \log (n)) $ to each update, time $ O (\log (n)) $ to each change in $ T_\mathcal{A} $, and time $ O (1) $ to each change in $ T_\mathcal{B} $.
Overall, $ \mathcal{C} $ therefore has an amortized update time $ O (\tfrac{t_\mathcal{A} (m, n) \log (n)}{k} + \chi_\mathcal{A} h_\mathcal{A} \cdot u_\mathcal{B} (k h_\mathcal{A}, k h_\mathcal{A}) + h_\mathcal{A} \log (n)) $.
\end{proof}

\begin{theorem}\label{thm:fully dynamic tree embedding trade-of}
For every integer $ i \geq 2 $, when started on an empty graph, there is a fully dynamic algorithm for maintaining a rooted tree embedding of height $ i \cdot O (\log (nW)) $ with expected stretch $ (O (\log (n)))^{2 i - 1} (O (\log (nW)))^{i-1} $ that with high probability has amortized update time $ m^{1/i + o(1)} \cdot (O (\log (n W)))^{4i - 3} $. 
\end{theorem}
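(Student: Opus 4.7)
The plan is to prove the statement by induction on $i$, repeatedly applying the ``decremental to fully dynamic'' reduction of Lemma~\ref{lem:decremental to fully dynamic}, with the decremental algorithm always being the one from Theorem~\ref{thm:decremental TreeEmbedding} and the fully dynamic algorithm being the one from the previous inductive step. Throughout the plan, I use the parameters provided by Theorem~\ref{thm:decremental TreeEmbedding}: $h_\mathcal{A} = O(\log(nW))$, $s_\mathcal{A} = O(\log^{2} n \log(nW))$, $\chi_\mathcal{A} = O(\log n \log(nW))$, and $t_\mathcal{A}(m,n) = m^{1+o(1)} \log^{2} W$.

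For the base case $i = 2$, I would instantiate Lemma~\ref{lem:decremental to fully dynamic} with the trivial fully dynamic algorithm $\mathcal{B}$ that rebuilds the static FRT-style probabilistic tree embedding via the Blelloch--Gu--Sun construction after every update, giving $h_\mathcal{B} = O(\log(nW))$, $s_\mathcal{B} = O(\log n)$, and $u_\mathcal{B}(m', n') = O(m' \log n)$. Setting the phase length to $k = \sqrt{m}$, the three terms in the update-time bound of Lemma~\ref{lem:decremental to fully dynamic} evaluate to $\tilde O(m^{1+o(1)}/\sqrt{m}) = m^{1/2+o(1)}$, $\tilde O(\sqrt{m}\, h_\mathcal{A}^{2} \chi_\mathcal{A}) = m^{1/2+o(1)}$, and $\tilde O(1)$, respectively. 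The height is $h_\mathcal{A} + h_\mathcal{B} = 2 \cdot O(\log(nW))$ and the expected stretch is $s_\mathcal{A} s_\mathcal{B} = O(\log^{3} n \log(nW))$, matching the claim for $i = 2$.

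For the inductive step, assume a level-$(i-1)$ fully dynamic algorithm $\mathcal{B}$ with height $h_\mathcal{B} = (i-1)\cdot O(\log(nW))$, expected stretch $s_\mathcal{B} = (O(\log n))^{2i-3}(O(\log(nW)))^{i-2}$, and amortized update time $u_\mathcal{B}(m', n') = {m'}^{1/(i-1) + o(1)}(O(\log(nW)))^{4i-7}$. Apply Lemma~\ref{lem:decremental to fully dynamic} with phase length $k = m^{1-1/i}$. Then the first term becomes $\tilde O(t_\mathcal{A}(m,n) / k) = m^{1/i + o(1)} \log^{2} W$. For the second term, observe that $k h_\mathcal{A} = m^{1-1/i}\cdot O(\log(nW))$, so $(k h_\mathcal{A})^{1/(i-1) + o(1)} = m^{(1-1/i)/(i-1) + o(1)} \cdot (O(\log(nW)))^{o(1)} = m^{1/i + o(1)}$ using the identity $(1 - 1/i)/(i-1) = 1/i$. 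Multiplying by the overhead $\chi_\mathcal{A} h_\mathcal{A}$ and the polylog-in-$nW$ factor coming from $u_\mathcal{B}$, this stays within $m^{1/i + o(1)} \cdot (O(\log(nW)))^{4i - 3}$, which is the target bound. The additive $h_\mathcal{A} \log n$ term is negligible. The height and stretch then follow from Lemma~\ref{lem:decremental to fully dynamic}: $h_\mathcal{A} + h_\mathcal{B} = i \cdot O(\log(nW))$ and $s_\mathcal{A} s_\mathcal{B} = (O(\log n))^{2i-1}(O(\log(nW)))^{i-1}$.

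The main obstacle I expect is the careful bookkeeping of polylogarithmic overheads when tracking the exponent of $\log(nW)$ through the recursion: each inductive step multiplies the stretch by $s_\mathcal{A}$ (contributing an extra $\log^{2} n \log(nW)$) and the update time gains an additional $(\log(nW))^{4}$ factor, so one has to verify that the closed-form $(O(\log n))^{2i-1}(O(\log(nW)))^{i-1}$ and the time exponent $4i - 3$ are exactly what the recurrences produce. The algebraic identity $(1-1/i)/(i-1) = 1/i$ is what makes the balance between the two competing terms work, and the tight bound $\chi_\mathcal{A} = O(\log n \log(nW))$ from Theorem~\ref{thm:decremental TreeEmbedding} (not just what would follow from the total update time) is crucial: without it, the number of changes in the auxiliary graph would be too large and the $\chi_\mathcal{A} h_\mathcal{A}\cdot u_\mathcal{B}(\cdot)$ term would dominate. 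The fact that each phase of length $k$ begins from an empty graph for the recursive invocation of $\mathcal{B}$ on the auxiliary graph of size $O(k h_\mathcal{A})$ is precisely the setting in which the inductive hypothesis applies, since the inductive statement is formulated for algorithms ``when started on an empty graph.''
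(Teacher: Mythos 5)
Your plan follows essentially the same route as the paper's proof: an induction that repeatedly applies Lemma~\ref{lem:decremental to fully dynamic} with the decremental algorithm of Theorem~\ref{thm:decremental TreeEmbedding} and phase length $k=m^{1-1/i}$, grounded in the static Blelloch--Gu--Sun construction (the paper states the base case as $i=1$, i.e., the static algorithm viewed as a fully dynamic one, while you unroll this into an $i=2$ base case; that difference is cosmetic). The parameter bookkeeping, the identity $(1-1/i)/(i-1)=1/i$, the role of the tight bound on $\chi_\mathcal{A}$, and the observation that the auxiliary graph starts empty in each phase so that the inductive hypothesis applies to $\mathcal{B}$ all match the paper's argument.

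One detail is missing. Lemma~\ref{lem:decremental to fully dynamic} gives its amortized bound only \emph{if at least $k$ updates are performed}, whereas the theorem claims an amortized bound for an algorithm started on an empty graph, i.e., for update sequences of arbitrary length, including fewer than $k=m^{1-1/i}$ updates. Your plan does not say what the level-$i$ algorithm does during this initial stretch; as written, the amortization of $t_\mathcal{A}(m,n)/k$ over the first phase would be unjustified if that phase is never completed. The paper handles this by running the level-$(i-1)$ algorithm $\mathcal{B}$ directly on the input graph until $k$ updates have occurred (the graph then has at most $m^{(i-1)/i}$ edges, so $u_\mathcal{B}$ already yields $m^{1/i+o(1)}$ times polylogarithmic factors), and only then switching to the reduction of Lemma~\ref{lem:decremental to fully dynamic}. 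A second, standard technicality the paper also notes is that $m$ is not known in advance; this is handled by doubling an upper bound on $m$ and rebuilding. With these two small patches your plan coincides with the paper's proof.
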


\begin{proof}
Observe first that -- by a standard technique -- it suffices to give an algorithm for the setting where $ m $ is a known upper bound on the maximum number of edges: We start with the upper bound being a constant and whenever the number of edges exceeds our upper bound we double the upper bound and restart the whole algorithm; the time needed for re-inserting the current edges of the graph after such a restart can be charged to the previous $ m/4 $ updates.

We now give an inductive proof in which for technical reasons\footnote{The case $ i = 1 $ is not included in the statement of the theorem because we are not obtaining a new result in this case.} we prove the statement for all $ i \geq 1 $.
The base case $ i = 1 $ holds due to the static algorithm of Blelloch, Gu, and Sun~\cite{Blelloch0S17} for computing an FRT-tree embedding~\cite{FakcharoenpholRT04}, which provides height $ O (\log (nW)) $ and expected stretch $ O (\log (n)) $, in time $ O (m \log (n)) = m^{1+o(1)} O (\log (nW) $.
For the inductive step, consider $ i \geq 2 $.
By the induction hypothesis, there is a fully dynamic algorithm~$ \mathcal{B} $ maintaining a rooted tree embedding of height $ h_\mathcal{B} = (i-1) \cdot O (\log (nW)) $ with expected stretch $ s_\mathcal{B} = (O (\log (n)))^{2 (i-1) - 1} (O (\log (nW)))^{(i-1)-1} $ in amortized update time $ u_\mathcal{B} (m, n) = m^{1/(i-1) + o(1)} \cdot (O (\log (nW)))^{4 (i-1) - 3} $.
We need to specify a fully dynamic algorithm~$ \mathcal{C} $ that starts from an empty graph.
As long as the number of updates to the graph is less than $ k = m^{1 - 1/i} $ we simply run algorithm $ \mathcal{B} $.
As the graph only has $ k = m^{1 - 1/i} = m^{(i-1)/i} $ many edges in this initial phase, we get an amortized update time of 
\begin{equation*}
    \left( m^{1 - 1/i} \right)^{1/(i-1) + o(1)} \cdot (O (\log (nW))^{4 (i-1) - 3} \leq m^{1/i + o(1)} \cdot O (\log (nW))^{4 (i-1) - 3} \, .
\end{equation*}

Whenever the number of updates to the graph exceeds the bound $ k = m^{1 - 1/i} $, we switch to algorithm~$ \mathcal{C} $ which is obtained by applying Lemma~\ref{lem:decremental to fully dynamic} with the fully dynamic algorithm~$ \mathcal{B} $ and the decremental algorithm~$ \mathcal{A} $ of Theorem~\ref{thm:decremental TreeEmbedding} maintaining a rooted tree embedding of height $ h_\mathcal{A} = O (\log (n W)) $ with expected stretch $ s_\mathcal{A} = O (\log^2 (n) \log (nW)) $ such that with high probability for each node the path to the root in $ T_\mathcal{A} $ changes at most $ \chi_\mathcal{A} = O (\log(n) \log(nW)) $ times and the total update time is $ t_\mathcal{A} (m, n) = m^{1 + o(1)} \log^2 (W) $.
We then arrive at a fully dynamic algorithm~$ \mathcal{C} $ for maintaining a rooted tree embedding of height $ h_\mathcal{A} + h_\mathcal{B} = O (\log (nW)) + (i-1) \cdot O (\log (nW)) = i \cdot O (\log (nW)) $ with stretch $ s_\mathcal{A} s_\mathcal{B} = O (\log^2 (n) \log (nW)) \cdot (O (\log (n)))^{2 (i-1) - 1} (O (\log (nW)))^{(i-1)-1} = (O (\log (n)))^{2 i - 1} (O (\log (nW)))^{i-1} $.
Setting $ k = m^{1 - 1/i} $, the amortized update time of this algorithm (with high probability) is
\begin{align*}
    \frac{m^{1 + o(1)} \log^2 (W) \log (n)}{k} &+ O (\log (n) \log^2 (nW)) \cdot (O (k \log (nW)))^{1/(i-1) + o(1)} \cdot (O (\log (n)))^{4(i-1) - 3} \\
    &\quad+ O (\log (nW) \log (n)) \\
    &= m^{1/i  + o(1)} \cdot O (\log^3 (nW)) \\
    &\quad+ m^{1/i + o(1)} \cdot (O (\log (nW)))^{1/(i-1) + o(1)} \cdot (O (\log (nW)))^{4(i-1)} + O (\log^2 (nW)) \\
    &= m^{1/i  + o(1)} \cdot (O (\log (nW)))^{4i - 3} \qedhere
\end{align*}
\end{proof}

Our theorem gives a trade-off between expected stretch and update time.
The smallest expected stretch, namely $ O (\log^3 (n) \log (nW)) $ is obtained for $ i = 2 $ which gives an update time of $ m^{1/2 + o(1)} \log^5 (nW) $.
To minimize update time, we balance the two terms $ m^{1/i} $ and $ (O (\log (nW)))^{4i-3} $ by setting $ i = \lceil \tfrac{\sqrt{\log (n)}}{\log (\log (nW))} \rceil $.
For this choice, both the update time and the stretch are sub-polynomial as long as the edge weights are polynomial in $ n $.

\begin{corollary}\label{cor:fullydyn}
For graphs with edge weights that are polynomial in $ n $, there is a fully dynamic algorithm for maintaining a rooted tree embedding of height $O (\log^{3/2} (n)) $ with expected stretch $ n^{o(1)} $ that with high probability has amortized update time $ n^{o(1)} $.
\end{corollary}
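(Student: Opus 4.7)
The plan is to deduce the corollary directly from Theorem~\ref{thm:fully dynamic tree embedding trade-of} by choosing the free integer parameter $i$ to minimize the amortized update time, and then verifying that for this choice the height and stretch bounds also satisfy the claimed asymptotics. Since the edge weights are polynomial in $n$, we have $W = n^{O(1)}$, so $\log(nW) = \Theta(\log n)$ and $\log\log(nW) = \Theta(\log\log n)$; consequently every occurrence of $\log(nW)$ in the statement of Theorem~\ref{thm:fully dynamic tree embedding trade-of} can be replaced by $\Theta(\log n)$.

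The update time $m^{1/i+o(1)} \cdot (O(\log n))^{4i-3}$ is the product of two factors pulling in opposite directions: $m^{1/i}$ decreases in $i$ while $(\log n)^{4i-3}$ increases in $i$. Using the trivial bound $m \leq n^2$, the log of this product is $O(\log n / i) + O(i \log\log n)$, which is minimized for $i \asymp \sqrt{\log n / \log\log n}$. I would therefore set
\[
i \;=\; \Big\lceil \tfrac{\sqrt{\log n}}{\log\log(nW)} \Big\rceil \,,
\]
and then mechanically verify each of the three bounds. For the update time, $m^{1/i} \leq n^{2/i} = n^{O(\log\log n / \sqrt{\log n})} = n^{o(1)}$ and $(\log n)^{4i-3} = 2^{O(i \log\log n)} = 2^{O(\sqrt{\log n})} = n^{o(1)}$, so their product is $n^{o(1)}$. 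For the expected stretch, $(O(\log n))^{2i-1}(O(\log n))^{i-1} = 2^{O(i\log\log n)} = 2^{O(\sqrt{\log n})} = n^{o(1)}$ by the same calculation. For the height, $i \cdot O(\log(nW)) = O(\sqrt{\log n}/\log\log n) \cdot O(\log n) = O(\log^{3/2}(n)/\log\log n) = O(\log^{3/2}(n))$.

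This is really a parameter-tuning exercise rather than a theorem with a genuine obstacle. The only mild subtlety is that the same value of $i$ must simultaneously make $m^{1/i}$ and $(\log n)^{O(i)}$ sub-polynomial, which is precisely why the optimal choice is of the form $\sqrt{\log n}/\log\log n$ and why the resulting update time (and stretch) is $n^{o(1)}$ rather than a fixed polylogarithm. Since $i$ is chosen to depend on $n$ but not on $m$, the high-probability guarantee from Theorem~\ref{thm:fully dynamic tree embedding trade-of} transfers directly, finishing the proof.
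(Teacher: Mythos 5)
Your proposal is correct and matches the paper's own argument: the paper likewise sets $i = \lceil \sqrt{\log n}/\log(\log(nW)) \rceil$ in Theorem~\ref{thm:fully dynamic tree embedding trade-of} to balance $m^{1/i}$ against $(O(\log(nW)))^{4i-3}$, using $W = n^{O(1)}$ so that $\log(nW) = \Theta(\log n)$. Your explicit verification of the height, stretch, and update-time bounds is exactly the calculation the paper leaves implicit.
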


Finally, note that instead of running our algorithm directly on the input graph, we can also run it on a sparse spanner of the input graph to obtain further running time improvements.
For example, a fully dynamic spanner algorithm of Forster and Goranci~\cite{ForsterG19} for unweighted graphs can maintain a spanner of size $ O (n \log (n)) $ with stretch $ O (\log n) $ in expected amortized update time $ O (\log^3 (n)) $.
By binning the edges into weight ranges of doubling size and unioning the resulting spanners we get an algorithm for maintaining a spanner of expected size $ O (n \log (n) \log (W)) $ with stretch $ O (\log n) $ in expected amortized update time $ O (\log^3 (n)) $.
Note that the update time also trivially bounds the number of changes performed to the spanner with each update to the graph. Spanner algorithms give worst-case stretch guarantees w.r.t. to the input graph, which directly carry over multiplicatively to the probabilistic tree embedding.
Therefore we obtain the following guarantees.

\begin{corollary}\label{cor:fully dynamic tree embedding sparsified trade-off}
For every integer $ i \geq 2 $, there is a fully dynamic algorithm for maintaining a rooted tree embedding of height $ i \cdot O (\log (nW)) $ with expected stretch $ (O (\log (n)))^{2 i} (O (\log (nW)))^{i-1} $ that has expected amortized update time $ n^{1/i + o(1)} \cdot (O (\log (nW)))^{4i + 2} $.
\end{corollary}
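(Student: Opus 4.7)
The plan is to feed the fully dynamic probabilistic tree embedding algorithm of Theorem~\ref{thm:fully dynamic tree embedding trade-of} a sparse multiplicative spanner of $G$ instead of $G$ itself, so that the bottleneck parameter $m$ in that theorem is replaced by the spanner's size, which is $\tilde O(n)$.

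First, I would maintain a spanner $H$ of $G$ of size $O(n \log n \log W)$ and stretch $O(\log n)$ using a weighted extension of the Forster--Goranci dynamic spanner. Since that algorithm is stated for unweighted graphs, I partition the edges of $G$ into $O(\log W)$ weight classes of geometrically increasing range and maintain one Forster--Goranci spanner per class on the unweighted subgraph induced by that class; the union is the desired spanner $H$. Each input update to $G$ touches exactly one class and, by the stated update time bound, causes an expected $O(\log^3 n)$ insertions and deletions to $H$, each of which I feed as an update into the PTE algorithm of Theorem~\ref{thm:fully dynamic tree embedding trade-of} run on $H$. The number of edges of $H$ is at all times bounded by $O(n \log n \log W)$, so the amortized time per spanner update is $(n \log n \log W)^{1/i + o(1)} \cdot (O(\log(nW)))^{4i - 3} = n^{1/i + o(1)} \cdot (O(\log(nW)))^{4i-3}$; multiplying by the $O(\log^3 n)$ spanner updates per input update and absorbing log factors yields the claimed expected amortized time $n^{1/i + o(1)} \cdot (O(\log(nW)))^{4i+2}$.

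For correctness, let $T$ denote the rooted tree embedding of $H$ maintained by the inner algorithm. Since $H$ is a spanner of $G$ on the same vertex set, we have $\dist_H(u,v) \geq \dist_G(u,v)$ for every pair $u,v \in V$, so $T$ also dominates distances in $G$. For the expected stretch of an edge $(u,v) \in E(G)$, the spanner guarantees $\dist_H(u,v) \leq O(\log n) \cdot w_G(u,v)$, while Theorem~\ref{thm:fully dynamic tree embedding trade-of} guarantees $\Ex[\dist_T(u,v)] \leq (O(\log n))^{2i-1} (O(\log(nW)))^{i-1} \cdot \dist_H(u,v)$. Composing these two bounds multiplicatively gives the expected stretch $(O(\log n))^{2i} (O(\log(nW)))^{i-1}$ stated in the corollary, and the height bound $i \cdot O(\log(nW))$ is inherited directly from the inner algorithm.

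The main technical subtlety is preserving the oblivious adversary assumption required by Theorem~\ref{thm:fully dynamic tree embedding trade-of}: the inner algorithm receives its updates not from the external adversary but from the randomized spanner algorithm, and we must argue that this does not let those updates depend on the inner algorithm's answers. Since the two algorithms use independent random bits and the inner algorithm never communicates any output to the spanner, the composed sequence of updates to $H$ is a function of only the external (oblivious) adversary's update sequence and the spanner's internal randomness, hence remains oblivious to the inner algorithm's random choices. Once this is in place, all remaining steps are routine bookkeeping to translate the inner tree to an output tree on $V(G)$.
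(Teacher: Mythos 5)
Your proposal is correct and follows essentially the same route as the paper: bin the edges into $O(\log W)$ weight classes, maintain a Forster--Goranci spanner per class of total size $O(n\log n\log W)$ and stretch $O(\log n)$ with $O(\log^3 n)$ expected amortized changes per update, run the algorithm of Theorem~\ref{thm:fully dynamic tree embedding trade-of} on the spanner, and compose the worst-case spanner stretch multiplicatively with the expected embedding stretch. Your explicit check that the spanner-generated update sequence remains oblivious to the inner algorithm's randomness is a point the paper leaves implicit, and it is argued correctly.
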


\section{Applications}

\subsection{Dynamic Distance Oracle}
In this section, we show that our fully dynamic algorithm for maintaining a probabilistic tree embedding almost directly leads to a dynamic approximate distance oracle with comparable guarantees.  

An \emph{approximate distance oracle} of \emph{stretch} $ \alpha \geq 1 $ of a graph $ G = (V, E) $ is a data structure supporting a query operation that, for any given pair of nodes $ u, v \in V $ returns a distance estimate $ \delta (u, v) $ that never under-estimates the actual distance and over-estimates it by a factor of at most $ \alpha $, i.e., $ \dist_G (u, v) \leq \delta (u, v) \leq \alpha \cdot \dist_G (u, v) $.
The \emph{query time} is a bound on the time needed to support each query operation.
A \emph{dynamic approximate distance oracle} is a fully dynamic algorithm for maintaining such an approximate distance oracle.

The main insight behind our algorithm is to maintain logarithmically many independent copies of the probabilistic tree embedding data structure and upon receiving a distance query, compute in each tree the distance between the queried vertex pair and return the smallest distance as an estimate. Formally, let $\mathcal{D}_1, \ldots, \mathcal{D}_{a \log_2 n}$ be the data structures that dynamically maintain the probabilistic tree embeddings $T_1,\ldots,T_{a \log_2 n}$, where each $\mathcal{D}_i$ is obtained by invoking Theorem~\ref{thm:fully dynamic tree embedding trade-of} on an initially empty graph $G$ and $a$ is a non-negative parameter. Whenever an edge is inserted or deleted from $G$, we simply pass this update to each $\mathcal{D}_i$. Upon receiving a query about the distance between any vertex pair $(u,v)$ in $G$, we compute $\dist_{T_i}(u,v)$ in each $T_ i$ and return $\min_{i} \{\dist_{T_i}(u,v)\}$ as an estimate.

We next argue about the correctness and running time of the above dynamic distance oracle construction.

\begin{theorem} \label{thm:dynamicDistanceOracle}
	For every integer $i \geq 2$, when starting on an empty graph, there is a dynamic approximate distance oracle with query time $ O(i \log n \log (nW) $ that with high probability has stretch $ (O (\log (n)))^{2 i - 1} (O (\log (nW)))^{i-1} $ and amortized update time $ m^{1/i + o(1)} \cdot (O (\log (n W)))^{4i - 2} $ or stretch $ (O (\log (n)))^{2 i} (O (\log (nW)))^{i-1} $ with high probability and expected amortized update time $ n^{1/i + o(1)} \cdot (O (\log (nW)))^{4i + 3} $.
\end{theorem}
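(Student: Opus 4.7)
The plan is to implement exactly the construction sketched just before the theorem statement: maintain $L = \Theta(\log n)$ independent copies $\mathcal{D}_1,\dots,\mathcal{D}_L$ of the fully dynamic PTE algorithm, producing trees $T_1,\dots,T_L$. For the first regime, each $\mathcal{D}_j$ is an independent instance of Theorem~\ref{thm:fully dynamic tree embedding trade-of}; for the second regime, of Corollary~\ref{cor:fully dynamic tree embedding sparsified trade-off}. Every edge insertion or deletion is forwarded to all $L$ copies. A query $(u,v)$ is answered by computing $\dist_{T_j}(u,v)$ in each tree (walk from $u$ and $v$ to their LCA using parent pointers) and returning the minimum.

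The correctness argument has two halves. Non-underestimation is immediate: each $T_j$ is a tree embedding, so $\dist_{T_j}(u,v) \geq \dist_G(u,v)$, and hence $\min_j \dist_{T_j}(u,v) \geq \dist_G(u,v)$. For the approximation factor, fix a pair $(u,v)$ and a time step. Let $s$ denote the expected-stretch bound of a single copy (i.e., $(O(\log n))^{2i-1}(O(\log(nW)))^{i-1}$ in the first regime, one factor of $O(\log n)$ larger in the second). By Markov's inequality, $\Pr[\dist_{T_j}(u,v) > 2 s \cdot \dist_G(u,v)] \leq 1/2$ for each $j$, and by independence of the copies
\begin{equation*}
    \Pr\bigl[\min_{j} \dist_{T_j}(u,v) > 2 s \cdot \dist_G(u,v)\bigr] \leq 2^{-L} = n^{-\Omega(c)}.
\end{equation*}
Choosing $c$ large enough and taking a union bound over all $O(n^2)$ pairs and all polynomially many time steps yields that with high probability every queried pair at every moment enjoys stretch at most $2s$, which is absorbed into the stated $(O(\log n))^{2i-1}(O(\log(nW)))^{i-1}$ (resp.\ $(O(\log n))^{2i}(O(\log(nW)))^{i-1}$) bound.

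For the resource bounds, each tree has height $h = i \cdot O(\log(nW))$, so computing $\dist_{T_j}(u,v)$ takes $O(h)$ time by lifting $u$ and $v$ level by level to a common ancestor; summing over the $L$ trees gives query time $O(L h) = O(i \log n \log(nW))$. For updates, every operation is replayed $L$ times: in the first regime the amortized update time becomes $L \cdot m^{1/i+o(1)}\cdot(O(\log(nW)))^{4i-3} = m^{1/i+o(1)}\cdot(O(\log(nW)))^{4i-2}$ whp, and analogously in the second regime we pick up an extra $\log n$ factor on the expected amortized bound of Corollary~\ref{cor:fully dynamic tree embedding sparsified trade-off}, yielding $n^{1/i+o(1)}\cdot(O(\log(nW)))^{4i+3}$.

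The only real subtlety, and the step I would be most careful with, is the conversion of the \emph{expected} stretch guarantees of Theorem~\ref{thm:fully dynamic tree embedding trade-of} and Corollary~\ref{cor:fully dynamic tree embedding sparsified trade-off} into a \emph{with high probability} guarantee for the oracle. This requires (i) using independent random bits across the $L$ copies so that Markov's inequality composes multiplicatively, (ii) invoking the oblivious-adversary assumption so that the update sequence (and hence the pair $(u,v)$ being queried) is fixed independently of the algorithm's coin flips, and (iii) a union bound over a polynomial number of pair-time pairs, which is where the constant in $L = c\log n$ is chosen. Once these points are handled, the rest is routine bookkeeping.
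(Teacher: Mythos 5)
Your proposal is correct and follows essentially the same route as the paper: $\Theta(\log n)$ independent copies of the fully dynamic embedding, Markov's inequality plus independence to boost the expected-stretch bound to a high-probability bound, a union bound over vertex pairs, and the same bookkeeping for query and update times (with the second trade-off obtained from Corollary~\ref{cor:fully dynamic tree embedding sparsified trade-off}). Your extra care about independence of the copies, the oblivious adversary, and a union bound over time steps only makes explicit what the paper leaves implicit.
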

\begin{proof}
	By Theorem~\ref{thm:fully dynamic tree embedding trade-of} it follows that each $T_i$ is a probabilistic tree embedding with respect to the current graph $G$. To show the stretch guarantee, we will prove that with high probability, $\min_{i} \{\dist_{T_i} (u,v)\}$ is an $ (O (\log (n)))^{2 i - 1} (O (\log (nW)))^{i-1} $-approximation to $\dist_G(u,v)$. To this end, fix an arbitrary vertex pair $(x,y)$. By definition of probabilistic tree embeddings, for each $T_i$ we have that (i) $\dist_{T_i} (x,y) \geq \dist_G(x,y)$ and (ii) $\Ex[\dist_{T_i}(x,y)] \leq \alpha \cdot \dist_G(x,y)$, where $\alpha = (O(\log n))^{3i-2}$. Therefore, by Markov inequality, $\Pr[\dist_{T_i}(x,y) \geq 2 \alpha] \leq 1/2$, and hence
	\begin{align*}
	\Pr \left[\min_{i \in \{1,\ldots, a \log_2 n\}}  \{\dist_{T_i} (x,y) \geq 2 \alpha\} \right] &= \Pr[\dist_{T_i}(x,y) \geq 2 \alpha,~i = 1,\ldots, a \log_2 n]  \\ 
	& = \prod_{i=1}^{a \log_2 n} \Pr[\dist_{T_i}(x,y) \geq 2 \alpha] \leq (1/2)^{a \log_2 n} = n^{-a}.  		
	\end{align*} 
	
	Applying a union bound over at most $n^2$ distinct vertex pairs, we get that with probability at least $1 - n^{2-a}$, $\min_i \{\dist_{T_i}(u,v)\}$ is a $2\alpha = (O (\log (n)))^{2 i - 1} (O (\log (nW)))^{i-1} $ approximation to $\dist_G(u,v)$. 
	
	We next analyse the running time. Observe that the amortized update time of $ m^{1/i + o(1)} \cdot (O (\log (n W)))^{4i - 2} $ directly follows from Theorem~\ref{thm:fully dynamic tree embedding trade-of} as we maintain $O(\log n)$ copies of the tree embedding data structure, each having $ m^{1/i + o(1)} \cdot (O (\log (n W)))^{4i - 3} $ amortized update time. For the query time, Theorem~\ref{thm:fully dynamic tree embedding trade-of} guarantees that at any time each tree embedding $T_i$ has height $O(i \log(nW)) $. The latter implies that for any queried vertex pair $(u,v)$, we can compute the distance between $u$ and $v$ in $T_i$ in time $O(i \log(nW))$. As our construction maintains $O(\log n)$ tree embeddings, it follows that the query time is $O(i \log n \log (nW))$.
	The second trade-off on stretch and update time is obtained by plugging in the guarantees of Corollary~\ref{cor:fully dynamic tree embedding sparsified trade-off} instead of those of Theorem~\ref{thm:fully dynamic tree embedding trade-of}.
\end{proof}

\subsection{Dynamic Buy-at-Bulk Network Design}
In this section we present our dynamic algorithm for the buy-at-bulk network design problem. Recall that we are given a weighted, undirected graph $G=(V,E,\ell)$,
where the length of each edge $e$ is $\ell_e$, and a  \emph{non-decreasing}, \emph{sub-additive} price function $f : \mathbb{R}_{\geq 0} \rightarrow \mathbb{R}_{\geq 0}$~(i.e., (1) if $x \leq y$ then $f(x) \leq f(y)$ and (2) for any $x,y$, $f(x +y) \leq f(x) + f(y)$) that determines the cost $f(u)$ for purchasing a capacity $u$ on any edge in $G$.
The graph changes dynamically through edge insertion and deletions. Each query is given as input 
$ k $ source-sink pairs $ s_i, t_i $, each with an associated demand  $\dem(i)$, and outputs a value that is an approximation of the value of the optimal solution.

To achieve this, we follow the ideas of Awerbuch and Azar~\cite{AwerbuchA97} and the analysis of Williamson and Shmoys~\cite{DBLP:books/daglib/0030297}. The main observation is that the buy-at-bulk network design problem is easy to solve when the input graph is a tree: for each source-sink pair $ s_i, t_i $ there is a unique path $T_{s_i,t_i}$ connecting them in $ T $. Let $c^T_e = \sum_{i : e \in T_{s_i,t_i}} \dem(i)$ be the induced capacity for each edge $e \in T$. Thus, the optimal solution must purchase a capacity $c^T_e$ on each edge $e \in T$. We let $\mathrm{OPT}_T = \sum_{e \in E(T)} \ell^T_e f(c^T_e)$ denote the total cost of the optimal solution in $T$, where $\ell^T_e$ is the length of $e$ in $T$.

To solve the problem on general graphs, we can use an algorithm that produces a probabilistic tree embedding $T$, solve the problem on $T$ and then translate it back to the original graph. Awerbuch and Azar~\cite{AwerbuchA97} proved that this leads to an $O(\log n)$-approximation algorithm for the buy-at-bulk network design problem. 

Following the same approach, our dynamic algorithm proceeds as follows: given an initially empty graph $G$, we dynamically maintain a probabilistic tree embedding $T$ with $V(T) \supseteq V(G)$ using Theorem~\ref{thm:fully dynamic tree embedding trade-of}. Whenever an edge is inserted or deleted from $G$, we simply update our dynamic tree embedding $T$ with respect to this edge update.  Upon receiving a query about the optimal total cost of routing the demands $ \dem(1),\ldots,\dem(k)$, for $k$ source-sink pairs $s_i,t_i$, we do the following:
\begin{enumerate}
	\item Let $V_k : = \bigcup_{i} \{s_i,t_i\}$ be the union over vertices involved in the source-sink pairs.
	\item Construct the subtree $T' := \bigcup_{u \in V_k} T_{s,r_T}$ that consists of all the paths from vertices in $V_k$ to the root $r_T$ of $T$.
	\item Compute the optimal solution $\mathrm{OPT}_{T'}$ on $T'$ using the static algorithm described above.
	\item Return $\mathrm{OPT}_{T'}$ on $T'$ as an estimate.
\end{enumerate}

We next argue about the correctness (following~\cite{DBLP:books/daglib/0030297}) and running time of the above construction.

\begin{theorem} \label{thm:dynamicBuyAtBulk}
	For every integer $ i \geq 2 $, when started on an empty graph, there is a fully dynamic algorithm for maintaining an estimate that, in expectation, approximates up to an $(O (\log (n)))^{2 i - 1} (O (\log (nW)))^{i-1}$ factor (respectively up to an $ (O (\log (n)))^{2 i} (O (\log (nW)))^{i-1} $  factor) the cost of the optimal solution to any buy-at-bulk network design problem with $k$ source-sink pairs under edge insertions and deletions 
	in time $ m^{1/i + o(1)} \cdot (O (\log (n W)))^{4i - 3} $ per update operation (respectively time $ n^{1/i + o(1)} \cdot (O (\log (nW)))^{4i + 2} $ per update operation) and $O(i k \log (nW))$ time per query. 
\end{theorem}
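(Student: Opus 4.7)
The plan is to reduce dynamic buy-at-bulk to dynamic maintenance of a probabilistic tree embedding by exploiting the fact that buy-at-bulk on a tree admits a trivial closed-form optimal solution: each $(s_i, t_i)$ demand must be routed along the unique path $T_{s_i, t_i}$, inducing capacities $c_e^T = \sum_{i: e \in T_{s_i,t_i}} \dem(i)$ and total cost $\mathrm{OPT}_T = \sum_{e \in T} \ell_e^T f(c_e^T)$. I would maintain a tree $T$ sampled from the PTE distribution of $G$ using Theorem~\ref{thm:fully dynamic tree embedding trade-of} (respectively Corollary~\ref{cor:fully dynamic tree embedding sparsified trade-off} for the second trade-off), forwarding each edge insertion or deletion in $G$ directly to the PTE data structure. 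At query time the algorithm constructs the subtree $T'$ formed by the union of the root-paths of the $2k$ endpoints in $V_k$ and returns $\mathrm{OPT}_{T'}$.

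For correctness, two steps suffice. First, $\mathrm{OPT}_{T'} = \mathrm{OPT}_T$: since every tree path $T_{s_i, t_i}$ passes through the LCA of $s_i$ and $t_i$ in $T$, it lies entirely within the union of root-to-leaf paths from $V_k$, so restricting the tree computation to $T'$ preserves every relevant induced capacity $c_e^T$ and contributing edge length $\ell_e^T$. Second, I would establish the two classical Awerbuch--Azar inequalities $\mathrm{OPT}_G \le \mathrm{OPT}_T$ and $\Ex[\mathrm{OPT}_T] \le \alpha \cdot \mathrm{OPT}_G$, where $\alpha$ is the stretch of the embedding from Theorem~\ref{thm:fully dynamic tree embedding trade-of}. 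The lower bound comes from replacing each tree edge $(a,b)$ appearing in a routing on $T$ with a shortest $a$-to-$b$ path in $G$: subadditivity of $f$ gives $f(c_e^G) \le \sum_{(a,b): e \in \mathrm{SP}_G(a,b)} f(c_{ab}^T)$, and the distance domination $\dist_G(a,b) \le \ell_{ab}^T$ yields $\sum_e \ell_e f(c_e^G) \le \sum_{(a,b) \in T} \ell_{ab}^T f(c_{ab}^T)$. The upper bound in expectation is obtained by simulating the optimal $G$-routing on $T$: for each $G$-edge $e = (u,v)$ carrying capacity $c_e^G$, the tree path $T_{uv}$ inherits the same capacity, so an edge-by-edge charging combined with monotonicity and subadditivity of $f$ and linearity of expectation over the random choice of $T$ yields $\Ex[\mathrm{OPT}_T] \le \alpha \cdot \mathrm{OPT}_G$.

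For the running time, the update bound is immediate: the algorithm does no work per update beyond forwarding it to the PTE data structure, so both update-time bounds follow directly from Theorem~\ref{thm:fully dynamic tree embedding trade-of} and Corollary~\ref{cor:fully dynamic tree embedding sparsified trade-off}. For the query time, the tree $T$ has height $O(i \log(nW))$, hence $|T'| = O(k \cdot i \log(nW))$; computing $\mathrm{OPT}_{T'}$ then reduces to a single bottom-up traversal that, for each pair $(s_i, t_i)$, marks $\pm \dem(i)$ at the endpoints relative to their LCA, propagates cumulative demands toward the root to obtain $c_e^T$, and accumulates $\ell_e^T f(c_e^T)$, all in time linear in $|T'|$ and giving $O(ik \log(nW))$ per query.

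The main obstacle is the expected upper bound $\Ex[\mathrm{OPT}_T] \le \alpha \cdot \mathrm{OPT}_G$: unlike the distance-oracle setting of Theorem~\ref{thm:dynamicDistanceOracle}, one cannot amplify by taking a minimum over $\Theta(\log n)$ independent trees, since cost on a tree is not the sum of independent per-edge costs (capacities on shared tree edges couple the pairs through the subadditive price function $f$), so the guarantee is genuinely only in expectation. The proof of the upper bound must be careful to charge demand edge by edge under $f$ so that the stretch factor $\alpha$ from the PTE enters multiplicatively on the final sum rather than exponentially through any nesting of $f$; the non-decreasing and subadditive hypotheses on $f$ are exactly what makes this charging go through.
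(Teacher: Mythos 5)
Your proposal is correct and takes essentially the same route as the paper: maintain a single PTE via Theorem~\ref{thm:fully dynamic tree embedding trade-of} (resp.\ Corollary~\ref{cor:fully dynamic tree embedding sparsified trade-off}), forward updates, answer a query on the subtree $T'$ of root paths with $\mathrm{OPT}_{T'}=\mathrm{OPT}_T$, and bound the expectation by translating the optimal $G$-routing onto $T$ edge by edge using monotonicity and subadditivity of $f$ together with the stretch bound, with the same update- and query-time accounting. The one inaccuracy is your closing remark that the guarantee cannot be boosted by taking a minimum over $\Theta(\log n)$ independent trees: since each scalar estimate satisfies $\mathrm{OPT}_{T'} \geq \mathrm{OPT}_G$ and has expectation at most $\alpha\cdot\mathrm{OPT}_G$, Markov's inequality applied to the estimates themselves (no per-edge independence through $f$ is needed) does give a high-probability bound, as the paper observes immediately after the theorem.
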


\begin{proof}
	By Theoreem~\ref{thm:fully dynamic tree embedding trade-of}, it follows that $T$ is a probabilistic tree embedding with respect to the current graph $G$. Recall that there is a one-to-one correspondence between the leaf vertices of $T$ and the vertices in $G$. To show the approximation guarantee of the estimate returned by the query operation, we will prove that $\Ex[\mathrm{OPT}_T'] \leq (O (\log (n)))^{2 i - 1} (O (\log (nW)))^{i-1} \cdot \mathrm{OPT}_G$. We do this in several steps. First, observe that $T'$ contains the (unique) shortest path between the source-sink pairs $s_i,t_i$ in $T$, and thus $\mathrm{OPT}_{T'}  = \mathrm{OPT}_{T} $. Next, suppose there is an optimal solution that uses paths $\{P^*_1,\ldots, P^*_k\}$ in $G$. Then the optimal solution in $G$ uses capacity $c^*_e = \sum_{i:e \in P^*_i} \dem(i)$ on each edge $e$ and its cost is $\mathrm{OPT}_G = \sum_{e \in E} \ell_e f(c^*_e)$. We can easily translate the optimal solution in $G$ to $T$: for each $e = (u,v) \in E$, route $c^*_e$ units of demand along the shortest path $T_{u,v}$ between $u$ and $v$ in $T$. Let $\mathrm{cost}_T(\mathrm{OPT}_G)$ denote the cost of this routing in $T$. The claim below shows that this cost is at least as large as $\mathrm{OPT}_{T'}$.
	
	\begin{claim}
		$\mathrm{cost}_T(\mathrm{OPT}_G) \geq \mathrm{OPT}_{T'}$.
	\end{claim}
	\begin{proof}
		For each edge $e \in E(T)$, the optimal solution in $T$ uses capacity $c^T_e$, which in turn corresponds to the demand of all source-sink pairs $s_i,t_i$ that cross the cut induced by removing the edge $e$ from $T$. Observe that the translation of the optimal solution in $G$ gives another solution in $T$ which routes the demand $\dem_i$ between each pair $s_i,t_i$. Therefore, this solution must use at least capacity $c^T_e$ on each edge $e$ in $T$. Since $f$ is a non-decreasing function, it follows that  \[ \mathrm{cost}_T(\mathrm{OPT}_G) \geq \sum_{e \in E(T)} \ell^T_e f(c^T_e) = \mathrm{OPT}_T = \mathrm{OPT}_{T'}. \qedhere \]
	\end{proof}
	
	Using the above claim, to prove our approximation guarantee, it suffices to show that, in expectation, $\mathrm{cost}_T(\mathrm{OPT}_G)$ is at most $ (O (\log (n)))^{2 i - 1} (O (\log (nW)))^{i-1} \cdot \mathrm{OPT}_G$. To this end, as we will shortly prove, we claim that $\mathrm{cost}_T(\mathrm{OPT}_G)$ is at most $\sum_{e=(u,v) \in E}\dist_T(u,v) f(c^*_e)$. Since $T$ is a probabilistic tree embedding of $G$ with stretch $\alpha = (O (\log (n)))^{2 i - 1} (O (\log (nW)))^{i-1} $, we get that
	\begin{align*}
	\Ex\Bigg[\sum_{e=(u,v) \in E}\dist_T(u,v) f(c^*_e)\Bigg] & \leq \alpha \sum_{e=(u,v ) \in E} \dist_G(u,v) f(c_e^*) \\
	& \leq \alpha \sum_{e=(u,v ) \in E} \ell_e f(c_e^*) = \alpha \mathrm{OPT}_G.  
	\end{align*}
	
	To prove the claim, using the sub-additivity of $f$, we can bound $\mathrm{cost}_T(\mathrm{OPT}_G)$ as follows
	\begin{align*}
	\sum_{e' \in E(T)} \ell^T_{e'} & f\Bigg(\sum_{e=(u,v) \in E: e' \in T_{u,v}} c^*_e \Bigg)  \leq \sum_{e' \in E(T)} \ell^T_{e'}  \sum_{e=(u,v) \in E: e' \in T_{u,v}} f(c^*_e) \\
	& = \sum_{e=(u,v) \in E} f(c^*_e) \sum_{e' \in T_{u,v}} \ell^T_{e'} = \sum_{e=(u,v) \in E} \dist_T(u,v) f(c^*_e), 
	\end{align*}
	which completes the correctness proof. 
	
	We next analyze the running time. Since we maintain a single tree embedding data-structure, by Theorem~\ref{thm:fully dynamic tree embedding trade-of} it follows that our construction has an amortized update time of $ m^{1/i + o(1)} \cdot (O (\log (n W)))^{4i - 3} $ per operation. For the query time, Theorem~\ref{thm:fully dynamic tree embedding trade-of} guarantees that any tree $T$ has height $O(i \log (nW))$. The latter guarantees that the length of each path from a leaf vertex  to the root in $T$ is $O(i \log (nW))$, which in turn implies that the size of the subtree $T'$ is bounded by $O(ik\log (nW))$. As it is easy to see that an optimal solution to the buy-at-bulk network design problem in $T'$ can be computed in time proportional to its size, we conclude that the query time is $O(ik\log (nW))$.
	
	The second trade-off on approximation ratio and update time is obtained by plugging in the guarantees of Corollary~\ref{cor:fully dynamic tree embedding sparsified trade-off} instead of those of Theorem~\ref{thm:fully dynamic tree embedding trade-of}.
\end{proof}
The approximation ratio can be turned into a high-probability bound by running $\Theta(\log n)$ copies of the data structure and returning the minimum of their values. This increases the running time by a $\Theta(\log n)$ factor.

\bibliography{references}
\bibliographystyle{alpha}

\end{document}